\documentclass[reprint,preprintnumbers,nofootinbib,superscriptaddress,amsmath,amssymb,aps,prd]{revtex4-2}
\pdfoutput=1
\usepackage{graphicx,hyperref,braket,orcidlink, algorithm2e, amssymb, amsmath, dsfont, subfigure, lineno, bm, slashed, multirow, amsthm, booktabs}
\usepackage[nolist,nohyperlinks]{acronym}

\graphicspath{{figs/}}
\hypersetup{
	bookmarks=true,
	unicode=true,
	pdftoolbar=true,
	pdfmenubar=true,
	pdffitwindow=false,
	pdfstartview={FitH},
	pdftitle={pdf},
	pdfauthor={J.Y.~Araz}, 
	pdfsubject={pdf},
	pdfcreator={J.Y.~Araz},
	pdfproducer={}, 
	pdfkeywords={},
	pdfnewwindow=true,
	colorlinks=true,
	linkcolor=blue,
	citecolor=olive,
	filecolor=magenta,
	urlcolor=cyan
}
\definecolor{org}{HTML}{FF4F00}

\newtheorem{theorem}{Theorem}

\newtheorem{definition}{Definition}
\newtheorem{proposition}{Proposition}

\newcommand{\Prob}{\mathbb{P}}
\newcommand{\E}{\mathbb{E}}

\newcommand{\FDP}{\mathrm{FDP}}
\newcommand{\FDR}{\mathrm{FDR}}
\newcommand{\pw}{\hat p^{\,w}}

\begin{document}

\title{Conformal calibration and look-elsewhere effect in anomaly detection\\for new-physics searches}

\author{Jack Y. Araz\orcidlink{0000-0001-8721-8042}}
\email{j.araz@ucl.ac.uk}
\affiliation{Department of Physics and Astronomy, University College London, London, WC1E 6B, UK}
\affiliation{Department of Engineering, City St. George’s, University of London, London, EC1V 0HB, UK}

\author{Michael Spannowsky\orcidlink{0000-0002-8362-0576}}
\email{michael.spannowsky@kit.edu}
\affiliation{Institute for Theoretical Physics, Campus S\"ud, Karlsruhe Institute of Technology (KIT), D-76128 Karlsruhe, Germany}
\affiliation{Institute for Quantum Materials and Technologies, Karlsruhe Institute of Technology, Karlsruhe 76131, Germany}


\begin{abstract}
Machine-learned anomaly detection is reshaping searches for new physics, but it has outrun the statistics used to interpret it. A raw anomaly score has no calibrated meaning, a model that scans many regions inflates the look-elsewhere effect, and the asymptotic significances the field relies on are blind to the background mismodelling that anomaly detectors are especially prone to. We propose a calibration layer, built on conformal prediction, that turns \emph{any} anomaly score into a defensible significance with distribution-free, finite-sample guarantees. Conformal prediction converts scores into valid local $p$-values, weighted and Mondrian variants repair the sideband-to-signal-region exchangeability failures that resonant searches suffer, and a Gross--Vitells step carries the result through to a look-elsewhere-aware global significance. The layer does two things at once. It \emph{exposes} miscalibration that the standard pipeline cannot see, and it \emph{corrects} it, without retraining the detector. On public LHC Olympics data, a classifier develops a substructure--mass correlation that makes sideband-calibrated background $p$-values anti-conservative. 
Taken at face value, this manufactures a $\sim$$46\sigma$ excess from background sculpting alone, which the label-free weighted correction removes, restoring an honest null. 
When run as a blind wide-mass bump hunt, the standard asymptotic and unweighted procedures fabricate $\gtrsim10\sigma$ excesses and $\approx5\sigma$ excesses even in signal-free windows, while the conformal layer raises no false alarms and its global false-positive rate is verified on background-only pseudoexperiments. The result is an auditable, detector-agnostic path from an uncalibrated score to a trials-factor-aware significance, ready to be folded into experimental anomaly searches.
\end{abstract}

\maketitle

\acrodef{QCD}{Quantum chromodynamics}
\newcommand{\QCD}{\ac{QCD}\xspace}

\tableofcontents

\section*{Introduction}

Machine-learned anomaly detection (AD) has become a leading strategy in searches for physics beyond the Standard Model~\cite{Collins2018, Nachman2020, DAgnolo:2018cun, Farina:2018fyg, Kasieczka2021, ATLAS2020cwola, Feickert:2021ajf}. A model trained to flag events or regions that differ from the background emits a scalar \emph{anomaly score}, with larger values indicating greater ``unusualness.'' Two obstacles separate such a score from a defensible discovery statement. First, a raw score is uncalibrated. A value of $17.5$ says nothing by itself about how often the background alone yields something at least as extreme. Second, a flexible model scans many regions, observables, final states, and latent directions; the more places it may look, the more background fluctuations it selects that mimic the signal. This is the look-elsewhere effect~\cite{Gross2010, Cowan2011} in an acute, high-multiplicity form, recently shown to badly miscalibrate AD $p$-values when training and evaluation share data~\cite{Hein2025}.

The experimental response to multiplicity is mature. Local significances are computed from asymptotic profile-likelihood formulae~\cite{Cowan2011}, and a trials factor for a continuous scan is obtained from the Gross--Vitells up-crossing theory~\cite{Gross2010, Vitells2011}, with modern Gaussian-process and Gaussian-random-field accelerations~\cite{Ananiev2023, GRF2023}. Every quoted significance in the experimental AD literature~\cite{ATLAS2020cwola, ATLAS2023ad1, ATLAS2023ad2, ATLAS2025dijet, ATLAS2025svj, ATLAS2025multilep, CMS2024ad} ultimately rests on this asymptotic-plus-trials-factor machinery.

Conformal prediction is built for exactly this gap, producing finite-sample, distribution-free $p$-values that assume no background model, and a mature toolbox has grown around it for outlier and novelty detection. Our starting point is the conformal outlier-testing framework of Ref.~\cite{Bates2023}, which constructs marginally valid conformal $p$-values, establishes their positive dependence, and couples them to Benjamini--Hochberg false-discovery-rate control. That framework rests on the inductive-conformal $p$-value~\cite{ShaferVovk2008, Vovk2005, Vovk2013mondrian, Lei2018} and sits within a fast-moving literature, from AdaDetect~\cite{Marandon2024} and conformal selection~\cite{Jin2023} to integrative~\cite{Liang2024}, cross-conformal~\cite{Hennhofer2024}, and transductive~\cite{Gazin2023} variants. In a previous study, we introduced conformal prediction as a calibration standard in Machine Learning applications for HEP~\cite{Araz:2025vuw}. What has been missing is the bridge from this statistical machinery to new-physics searches at colliders. We supply it by carrying these tools through the trials-factor layer and adapting them to the failure modes a resonant anomaly search actually exhibits, so that an uncalibrated score becomes a global significance a search can quote.

We particularly focus on the canonical \emph{bump hunt}: a search for a new resonance that would appear as a localised excess over a smoothly falling background in some resonant variable, typically an invariant mass. Because the background shape is difficult to predict from first principles, it is estimated from the data. One designates a \emph{signal region} (SR), a window in the resonant variable where the resonance is hypothesised to sit, and \emph{sidebands} (SB) (or control regions), the adjacent signal-depleted intervals that fix the background expectation inside the SR. This control-versus-signal-region logic is the one-dimensional analogue of the ABCD method~\cite{Kasieczka:2020pil}, in which two approximately independent discriminating variables partition the plane into one signal-rich region and three control regions, and the background in the former is predicted from the latter without recourse to simulation. In both cases, the analysis is only as trustworthy as the assumption that the control regions track the background of the signal region.

Machine-learned anomaly detection adds a layer to this scaffold. Methods such as CWoLa hunting~\cite{Collins2018, Collins2019cwola}, ANODE~\cite{Nachman2020}, CATHODE~\cite{Hallin2021cathode}, laCATHODE~\cite{Hallin:2022eoq} or CURTAINS~\cite{Raine:2022hht} train a classifier or density model to separate SR-like events from the sideband-derived background, producing a per-event \emph{anomaly score} that concentrates any signal into its high-score tail. A bump hunt is then run on the score-enhanced sample, and its significance is read from the asymptotic-plus-trials-factor machinery. Such constructions have already been deployed on collider data~\cite{ATLAS2020cwola, ATLAS2025dijet, CMS2024ad}. 
Throughout the methodology, we therefore refer to a sideband calibration sample and a signal-region test sample, where the sideband provides the reference scores and the SR provides the events to be assessed for anomaly.

We close with a worked example on public data (Sec.~\ref{sec:lhco}), the LHC Olympics R\&D dataset~\cite{Kasieczka2021}, with the anomaly score produced by a classifier, similar to ref.~\cite{Collins2018, Collins2019cwola}.\footnote{The procedure applies to any similar analysis and is not limited to a particular machine-learning ansatz. Since the conformal layer acts on the score, a given detector propagates through the identical machinery.} There, the conformal diagnostic exposes a real exchangeability failure that the standard pipeline does not flag; the learned features are typically correlated with the observable that separates the signal and sideband regions, so sideband-calibrated background $p$-values become anti-conservative. We outline necessary features to diagnose and correct such potential failure in a model-agnostic way with minimal assumptions. We explain each construction in operational terms throughout and stress at the outset that this is a demonstration of methodology.

\section{Methodology}
\label{sec:methodology}

Let $s$ denote an anomaly score, oriented so that larger is more anomalous. Under the background-only hypothesis, the score is a random variable whose distribution is some pushforward of the input distribution through the trained model. For a ``black-box'' neural model, this distribution is generally unknown and has no convenient closed form. Consequently, a threshold rule such as ``flag~$s>15$'' has an unknown false-positive rate, and a list of scores cannot be ranked by statistical surprise. In Fig.~\ref{fig:calibration}, we prepared a mock example where anomaly scores for background events are generated from a gamma distribution, and we mimic the signal as a shift in this distribution. We will utilise the same mock dataset between Figs~\ref{fig:conf-val} -- \ref{fig:asym} for illustration purposes. The upper panel illustrates the difficulty by presenting four lines as potential signal events. When plotted against a background score distribution, they look ``large'', but their tail probabilities are not legible from the raw values.

\begin{figure*}[t]
    \centering
    \subfigure[Upper: a background score distribution with four illustrative scores marked; the raw values do not reveal their tail probabilities. Bottom: the conformal calibration map of Eq.~\eqref{eq:cp}, mapping each score to a conformal $p$-value. Dashed red-line represents $5\%$ threshold.\label{fig:calibration}]{\includegraphics[width=0.45\linewidth]{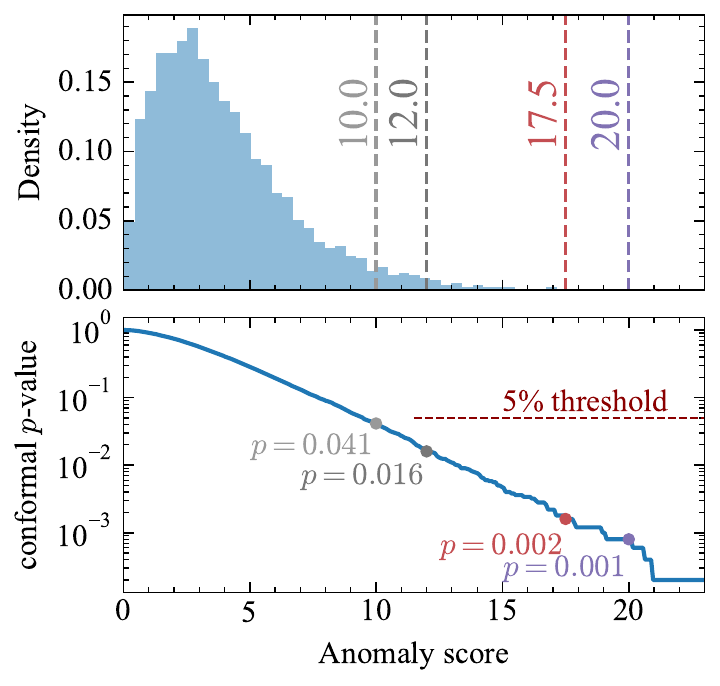}} \qquad\qquad
    \subfigure[Validity check (Prop.~\ref{prop:unif}). Upper: conformal $p$-values on a held-out background sample are approximately uniform. Lower: the empirical exceedance $\Prob(\hat p\le a)$ tracks the diagonal, confirming $\Prob(\hat p\le \alpha)\le \alpha$ up to sampling noise.\label{fig:validity}]{\includegraphics[width=0.45\linewidth]{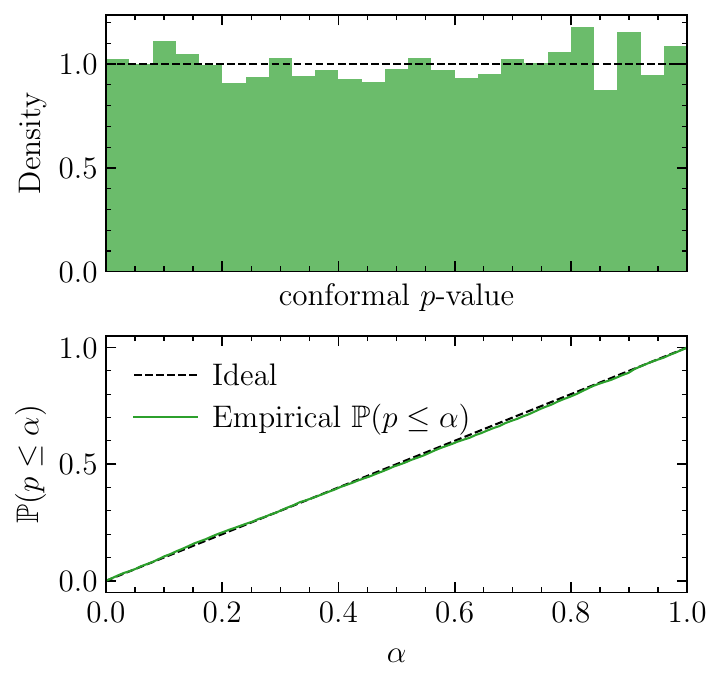}}
    \caption{Conformal calibration and validity checks on mock data.}
    \label{fig:conf-val}
\end{figure*}

\subsection{Split conformal calibration}
\label{sec:cp}

Reserve a \emph{calibration} sample of $n$ background-only scores $s_1,\dots,s_n$, drawn independently of model training. In an experiment, this is a blinded control region or a representative sideband. For a test score $s$, define the one-sided conformal $p$-value~\cite{ShaferVovk2008, Bates2023}
\begin{equation}
  \hat p(s) \;=\; \frac{1 + \bigl|\{\, i : s_i \ge s \,\}\bigr|}{n+1}.
  \label{eq:cp}
\end{equation}
This is the empirical upper-tail mass of the calibration scores at $s$, with an additional $+1$ in the numerator and denominator. The $+1$ smoothing is what makes the guarantee below \emph{exact} in finite samples and prevents the report of a literal zero from a finite calibration set. Eq.~\eqref{eq:cp} is the inductive (split) conformal $p$-value~\cite{Papadopoulos2002, Lei2018}. The score itself serves as the nonconformity measure. The bottom panel of Fig.~\ref{fig:calibration} shows the calibration map of Eq.~\eqref{eq:cp}, mapping each raw score to a conformal $p$-value and providing a quantitative score for each point in the upper panel.

\begin{theorem}[Conformal marginal validity; \cite{Vovk2005, Bates2023}]
\label{thm:cp}
Let $s_1,\dots,s_n,s$ be exchangeable real-valued random variables (in particular, this holds if they are i.i.d.). Let $\hat p(s)$ be defined by Eq.~\eqref{eq:cp}. Then for every $\alpha\in(0,1)$,
\begin{equation}
  \Prob\bigl(\hat p(s) \le \alpha\bigr) \;\le\; \alpha .
  \label{eq:valid}
\end{equation}
\end{theorem}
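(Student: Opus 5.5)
The plan is to use a rank argument that relies only on exchangeability. I would write $s_{n+1}:=s$ and rewrite the numerator of Eq.~\eqref{eq:cp} as
\begin{equation*}
  R \;:=\; \bigl|\{\, i\in\{1,\dots,n+1\} : s_i \ge s_{n+1}\,\}\bigr|,
\end{equation*}
which counts the test point itself. This gives $\hat p(s)=R/(n+1)$. The event $\{\hat p(s)\le\alpha\}$ is then the same as $\{R\le k\}$ with $k:=\lfloor \alpha(n+1)\rfloor$. If $k=0$ the event is empty, because $R\ge 1$, and there is nothing to prove. So I would assume $k\ge1$ from here on.

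The core step is a symmetrization over which of the $n+1$ points plays the role of the test point. For each $j$ I would define $R_j:=|\{i: s_i\ge s_j\}|$, so that $R=R_{n+1}$. Exchangeability says the joint law of $(s_1,\dots,s_{n+1})$ is invariant under permutations. Swapping $j$ and $n+1$ therefore shows that $\Prob(R_j\le k)$ is the same for every $j$. Averaging over $j$ gives
\begin{equation*}
  \Prob(R\le k) \;=\; \frac{1}{n+1}\,\E\Bigl[\,\bigl|\{\,j : R_j\le k\,\}\bigr|\,\Bigr].
\end{equation*}
It then suffices to prove the deterministic bound $|\{j: R_j\le k\}|\le k$ for every realization of the scores. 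That yields $\Prob(\hat p\le\alpha)\le k/(n+1)\le\alpha$.

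The main obstacle is this deterministic counting step, because ties have to be handled. My plan is to take $J=\{j:R_j\le k\}$, and, if it is nonempty, to pick $j^\*\in J$ with the smallest value $s_{j^\*}$. Every $j\in J$ satisfies $s_j\ge s_{j^\*}$, so all of them are counted in $R_{j^\*}$. Hence $|J|\le R_{j^\*}\le k$. The non-strict inequality $\ge$ in Eq.~\eqref{eq:cp} is exactly what makes ties count toward the rank, so this step works without random tie-breaking. The resulting statement is conservative rather than exact, which matches the "$\le$" in Eq.~\eqref{eq:valid}. I would close by remarking that, when ties occur with probability zero, $R$ is uniform on $\{1,\dots,n+1\}$ and the bound becomes $\lfloor\alpha(n+1)\rfloor/(n+1)$.
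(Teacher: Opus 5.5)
Your proof is correct, but it handles ties differently from the paper. The paper argues in two stages. In the tie-free case it asserts that exchangeability makes the rank $R$ uniform on $\{1,\dots,n+1\}$. For ties it appeals to randomised tie-breaking: the $\ge$ convention gives the test point the largest rank in its tie group, so $\hat p$ is at least the randomised rank, which is exactly uniform. Your route instead combines the symmetrisation $\Prob(R\le k)=\frac{1}{n+1}\E\bigl[|\{j:R_j\le k\}|\bigr]$ with the deterministic bound $|\{j:R_j\le k\}|\le k$, obtained from the minimal element of $J$. This treats both cases at once and needs no auxiliary randomisation. The paper's route, by contrast, has to construct the randomised rank, for example by attaching independent uniforms to the scores and checking that the augmented tuple stays exchangeable, and it leaves this step implicit. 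The paper's route does deliver the exact law of $R$ in the tie-free case, which is what Prop.~\ref{prop:unif} and the uniformity diagnostic rely on. Your argument recovers this as well: without ties the $R_j$ form a permutation of $\{1,\dots,n+1\}$, so $|J|=k$ exactly and your inequality becomes the equality $\Prob(R\le k)=k/(n+1)$. That observation also proves the closing remark you stated without proof.
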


\begin{proof}
Write $S=(s_1,\dots,s_n,s)$ and set $s_{n+1}:=s$. Consider the rank of $s_{n+1}$ among all $n+1$ values, counting from the top,
\[
  R \;=\; \bigl|\{\, j\in\{1,\dots,n+1\} : s_j \ge s_{n+1} \,\}\bigr|.
\]
By the definition Eq.~\eqref{eq:cp}, and because the test point is included in its own count once (the leading $+1$ in the numerator),
\[
  \hat p(s) \;=\; \frac{R}{n+1}.
\]
Suppose first that ties occur with probability zero. Exchangeability of $S$ implies that the rank of the designated element $s_{n+1}$ is uniformly distributed over $\{1,2,\dots,n+1\}$, that is, $\Prob(R=r)=1/(n+1)$ for each $r$. Hence for any $\alpha\in(0,1)$,
\[
  \Prob\!\left(\hat p(s)\le\alpha\right)
   = \Prob\!\left(R \le (n+1)\alpha\right)
   = \frac{\lfloor (n+1)\alpha \rfloor}{n+1}
   \le \alpha .
\]
When ties occur with positive probability, the standard super-uniform rank lemma applies. Breaking ties uniformly at random (or, equivalently, replacing the rank by its randomised version) restores the exact uniform distribution of the rank, and the deterministic $\ge$ convention of Eq.~\eqref{eq:cp} assigns the designated element the largest rank consistent with its score, which can only increase $\hat p(s)$ relative to the randomised value. The bound in Eq.~\eqref{eq:valid} therefore continues to hold, and the $p$-value is \emph{super}-uniform.
\end{proof}

It is essential to note that the bound holds for \emph{any} finite $n$ and makes \emph{no} assumptions about the shape of the score distribution. This \emph{distribution-free}, \emph{finite-sample} character is what is wanted for an opaque score. However, the guarantee is \emph{marginal}. The probability in Eq.~\eqref{eq:valid} averages over the joint draw of the calibration set and the test point, and does not assert conditional validity given a fixed calibration set, nor local validity within a sub-population. Conditional control is the role of the Mondrian construction of Sec.~\ref{sec:mondrian}, and the dependence this marginal averaging leaves in the count field is treated in Sec.~\ref{sec:nullvar}. The assumption that does the work is the exchangeability of the test point with the calibration sample. We return to it in Sec.~\ref{sec:weighted} and demonstrate it on LHCO data in Sec.~\ref{sec:lhco}.

\begin{proposition}[Background uniformity]
\label{prop:unif}
If $s$ is exchangeable with the calibration sample then $\hat p(s)$ is super-uniform
\[
\forall\,\alpha\in[0,1],\,\Prob(\hat p(s)\le\alpha)\le\alpha\ .
\]
It equals the uniform distribution only up to grid effects (the conformal $p$-value lives on the discrete grid $\{1/(n+1),\dots,1\}$) and, when scores tie with positive probability, the conservative excess from ties; equality is exact only on the conformal grid and in the absence of ties.
\end{proposition}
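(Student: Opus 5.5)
The plan is to reduce the super-uniformity claim to Theorem~\ref{thm:cp}. Then I will handle the two boundary values of $\alpha$ and the "equality" clause by a direct rank computation.

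\textbf{Step 1: the inequality.} For $\alpha\in(0,1)$, the bound $\Prob(\hat p(s)\le\alpha)\le\alpha$ is exactly Eq.~\eqref{eq:valid}. The hypothesis that $s$ is exchangeable with $s_1,\dots,s_n$ is precisely the hypothesis of Theorem~\ref{thm:cp}, so nothing new is needed. The endpoints are immediate. Since $\hat p(s)\ge 1/(n+1)>0$ always, the event $\{\hat p(s)\le 0\}$ is empty, which gives the case $\alpha=0$. For $\alpha=1$ the bound is trivial.

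\textbf{Step 2: the sharpness statement.} I will reuse the representation $\hat p(s)=R/(n+1)$ from the proof of Theorem~\ref{thm:cp}, where $R$ is the top-down rank of $s_{n+1}=s$. In the tie-free case, exchangeability makes $R$ uniform on $\{1,\dots,n+1\}$. Hence
\[
\Prob(\hat p(s)\le\alpha)=\frac{\lfloor (n+1)\alpha\rfloor}{n+1}.
\]
This equals $\alpha$ exactly when $\alpha\in\{0,1/(n+1),\dots,1\}$, and is strictly smaller otherwise. That gives the "grid effect" description: $\hat p(s)$ is the discrete uniform law on the conformal grid, and its CDF lies below the diagonal between grid points.

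\textbf{Step 3: ties.} When ties occur with positive probability, I will introduce the randomised rank $\tilde R$, obtained by breaking ties uniformly at random. The rank $\tilde R$ is exactly uniform, and $R\ge\tilde R$ pointwise. So $\Prob(R\le k)\le\Prob(\tilde R\le k)=k/(n+1)$.

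The step needing the most care is showing this inequality can be strict, which is the "conservative excess". This happens whenever $\Prob(R>\tilde R)>0$, i.e.\ when the test score ties with some calibration score with positive probability. For example, take $k$ such that, with positive probability, $\tilde R\le k<R$.

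This is only a qualitative refinement. The statement claims only that equality holds on the grid in the absence of ties, so it suffices to exhibit that ties can break equality, and it is not necessary to characterise the excess exactly.
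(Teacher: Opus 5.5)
Your proposal is correct and follows the paper's route. The inequality is inherited from Theorem~\ref{thm:cp}, and the grid and tie refinements come from the representation $\hat p(s)=R/(n+1)$, with $R$ uniform on $\{1,\dots,n+1\}$ when there are no ties. Beyond the paper's one-line proof, you explicitly handle the endpoints $\alpha\in\{0,1\}$, which Theorem~\ref{thm:cp} as stated does not cover, and you give the randomised-rank argument ($R\ge\tilde R$, strict with positive probability under ties) showing the conservative excess is real.
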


\begin{proof}
Immediate from Theorem~\ref{thm:cp} applied at every level $\alpha$, together with the observation that $R/(n+1)$ takes values on the grid $\{1/(n+1),\dots,(n+1)/(n+1)\}$ each with probability $1/(n+1)$.
\end{proof}

Prop.~\ref{prop:unif} provides a practical diagnostic tool for the method; \emph{on genuine background, conformal $p$-values \textbf{must} be uniformly distributed}. A departure from uniformity is direct evidence that exchangeability has failed. Fig.~\ref{fig:calibration} (bottom) shows the calibration map (raw score to conformal $p$-value), and Fig.~\ref{fig:validity} confirms uniformity on a held-out background sample. Small fluctuations of the empirical exceedance curve about the diagonal are finite-sample sampling noise of $\mathcal{O}(1/\sqrt{n})$ and not a violation, since Eq.~\eqref{eq:valid} bounds an expectation.

\subsection{The multiplicity problem}
\label{sec:bh}

A search produces not one $p$-value but $m$ of them, one per candidate region. Under the background-only hypothesis, each is approximately uniform, so a fixed per-region cut $\alpha$ flags an expected $\alpha m$ background regions by sheer luck. With $m=200$ and $\alpha=0.05$, this is ten tricksy candidates. The error rate of interest is therefore not the per-region rate but a property of the whole shortlist.

\begin{definition}[FDP and FDR]
For a shortlist of $R$ regions of which $V$ are truly background, the false discovery proportion is $\FDP = V/\max(R,1)$, and the false discovery rate is its expectation $\FDR = \E[\FDP]$.
\end{definition}

\subsection{The Benjamini--Hochberg procedure}

Given $p$-values $p_1,\dots,p_m$ and a target level $q\in(0,1)$, sort them as
$p_{(1)}\le\cdots\le p_{(m)}$, find
\begin{equation}
  k \;=\; \max\Bigl\{\, i : p_{(i)} \le \tfrac{i}{m}\,q \,\Bigr\},
  \label{eq:bh}
\end{equation}
and reject (shortlist) the hypotheses corresponding to $p_{(1)},\dots,p_{(k)}$~\cite{Benjamini1995}. If no $i$ satisfies the inequality, nothing is rejected. The threshold line $\tfrac{i}{m}q$ is strictest for the smallest $p$-value and relaxes linearly; the rule takes the \emph{largest} passing rank, rejecting everything up to it.

\subsection{FDR control}

\begin{theorem}[Benjamini--Hochberg, independent case~\cite{Benjamini1995}]
\label{thm:bh}
Suppose the $p$-values corresponding to the $m_0$ true background regions are independent and each is (super-)uniform under its null, and are independent of the remaining $p$-values. Then the procedure in Eq.~\eqref{eq:bh} satisfies
\begin{equation}
  \FDR \;=\; \E[\FDP] \;\le\; \frac{m_0}{m}\, q \;\le\; q.
  \label{eq:bhbound}
\end{equation}
\end{theorem}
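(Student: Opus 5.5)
I would prove Theorem~\ref{thm:bh} by the standard leave-one-out decomposition of the false discovery proportion into per-null contributions. Let $\mathcal H_0$ denote the index set of the $m_0$ true background regions, let $R$ be the number of rejections, and let $V=\sum_{j\in\mathcal H_0}\mathbf 1\{j \text{ rejected}\}$. The Benjamini--Hochberg rule of Eq.~\eqref{eq:bh} rejects $j$ exactly when $p_j\le qR/m$. Using this, I write
\[
\FDP=\sum_{j\in\mathcal H_0}\sum_{k=1}^m \frac{\mathbf 1\{p_j\le qk/m\}\,\mathbf 1\{R=k\}}{k},
\]
so it suffices to show that, for each $j\in\mathcal H_0$, the inner expectation is at most $q/m$. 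Summing over $j$ then gives $\FDR\le m_0q/m\le q$.

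The key device is to decouple $R$ from $p_j$. Fix $j\in\mathcal H_0$ and let $R_j(x)$ denote the number of BH rejections when $p_j$ is replaced by a value $x$ and all other $p$-values are held fixed. I would establish two facts.
\begin{enumerate}
\item On the event $\{p_j\le qk/m\}$, we have $R=k$ if and only if $R^{(j)}:=R_j(0)=k$. Setting $p_j$ to zero does not change the step-up crossing point as long as $p_j$ already lies below the threshold of the final rank. This is the step I would check most carefully, by comparing the sorted sequences with and without $p_j$.
\item $R^{(j)}$ is a function only of $\{p_i\}_{i\neq j}$.
\end{enumerate}
These facts give
\[
\E\bigl[\mathbf 1\{p_j\le qk/m\}\mathbf 1\{R=k\}\bigr]=\E\bigl[\mathbf 1\{p_j\le qk/m\}\mathbf 1\{R^{(j)}=k\}\bigr].
\]

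By the independence hypothesis, $p_j$ is independent of $R^{(j)}$. Super-uniformity (Prop.~\ref{prop:unif} or Theorem~\ref{thm:cp}) then gives $\Prob(p_j\le qk/m)\le qk/m$. The $k$-th term is therefore at most $\frac{qk}{m}\cdot\frac1k\,\Prob(R^{(j)}=k)$. Summing over $k$, and using $\sum_k\Prob(R^{(j)}=k)\le1$, yields the required bound $q/m$.

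I expect the main obstacle to be the combinatorial claim in fact 1: that the step-up rule's rejection count is unchanged when a rejected $p_j$ is moved to zero. To prove it, I would show that the set $\{i:p_{(i)}\le iq/m\}$ attains the same maximum. Lowering $p_j$ only shifts the ranks of $p$-values smaller than it by one. Every order statistic at or beyond rank $k$ is unchanged, so rank $k$ still passes and no rank above $k$ newly passes.

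Finally, I note the caveat relevant to this paper's setting. Conformal $p$-values that share a calibration set are not independent; they are instead PRDS. The proof above covers the theorem exactly as stated, under independence. The extension to the dependent case would instead use the Benjamini--Yekutieli/PRDS argument of Ref.~\cite{Bates2023}.
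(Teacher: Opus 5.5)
Your proposal is correct and follows the same leave-one-out route as the paper: replace $p_j$ by $0$, use the independence of $p_j$ from $R^{(j)}$ together with super-uniformity, and sum $\Prob(R^{(j)}=k)$ over $k$ to at most one. You also state explicitly the inclusion $\{p_j\le qk/m,\ R=k\}\subseteq\{R^{(j)}=k\}$, which the paper uses without stating. One small imprecision in your sketch of that step: when $p_j$ is itself the $k$-th order statistic, the rank-$k$ value does change (it drops to the old $(k-1)$-th value), but since it can only decrease, rank $k$ still passes and your conclusion holds (only this forward inclusion is needed for the inequality).
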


\begin{proof}
Index the regions $1,\dots,m$ and let $\mathcal H_0$ be the set of $m_0$ true background indices. For a region $j$, let $R$ denote the total number of rejections made by the procedure. The procedure rejects region $j$ if and only if $p_j \le \tfrac{R}{m} q$, where $R$ is the realised number of rejections. This is the defining fixed-point property of Eq.~\eqref{eq:bh}. Write the false discovery proportion as
\[
  \FDP \;=\; \sum_{j\in\mathcal H_0} \frac{\mathbf 1\{\text{$j$ rejected}\}}{\max(R,1)} .
\]
Decompose over the possible values $R=r$, $r=1,\dots,m$. On the event $\{R=r\}$, a true null $j$ is rejected exactly when $p_j \le \tfrac{r}{m}q$. Hence
\[
  \E[\FDP]
  = \sum_{j\in\mathcal H_0}\sum_{r=1}^{m}
      \frac{1}{r}\,
      \Prob\!\Bigl(p_j \le \tfrac{r}{m}q,\; R=r\Bigr).
\]
A standard leave-one-out argument uses the independence of $p_j$ from the other $p$-values. Let $R^{(j)}_r$ be the event that the procedure makes $r$ rejections when the $p$-value of region $j$ is set to $0$ (so that $j$ is certainly rejected). Independence and the super-uniformity $\Prob(p_j\le u)\le u$ give, for each true null $j$,
\begin{multline*}
  \sum_{r=1}^m \frac{1}{r}\,
   \Prob\!\Bigl(p_j\le\tfrac{r}{m}q,\;R=r\Bigr) \\
  \;\le\; \sum_{r=1}^m \frac{1}{r}\,\frac{rq}{m}\,
   \Prob\!\bigl(R^{(j)}_r\bigr)
  \;=\; \frac{q}{m}\sum_{r=1}^m \Prob\!\bigl(R^{(j)}_r\bigr).
\end{multline*}
The events $\{R^{(j)}_r\}_{r=1}^m$ are disjoint and exhaustive, so the inner sum is $1$. Therefore,
\begin{equation*}
  \E[\FDP] \;\le\; \sum_{j\in\mathcal H_0}\frac{q}{m}
   \;=\; \frac{m_0}{m}\,q \;\le\; q,
\end{equation*}
which is Eq.~\eqref{eq:bhbound}.
\end{proof}

\begin{theorem}[Positive dependence and the general case~\cite{Benjamini2001, Bates2023}]
\label{thm:by}
If the true-null $p$-values satisfy the positive regression dependence on a subset (PRDS) property, the bound in Eq.~\eqref{eq:bhbound} continues to hold for the unmodified procedure. Under arbitrary dependence, the bound holds if $q$ is replaced by $q/\sum_{i=1}^m i^{-1}$ (the Benjamini--Yekutieli correction~\cite{Benjamini2001}).
\end{theorem}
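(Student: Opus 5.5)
The plan is to reuse the decomposition from the proof of Theorem~\ref{thm:bh} and change only the step that used independence. As there, write
\[
  \FDR=\sum_{j\in\mathcal H_0}\sum_{r=1}^m \frac{1}{r}\,\Prob\bigl(p_j\le \tfrac{r}{m}q,\;R=r\bigr).
\]
For a fixed true null $j$, set $q_r:=rq/m$ and let $D_r:=\{R\le r\}$. Each $D_r$ is an increasing event in the vector of $p$-values, because raising any $p$-value can only shrink the BH rejection set.

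\textbf{PRDS case.} The key identity comes from expanding $\mathbf 1\{R=r\}=\mathbf 1_{D_r}-\mathbf 1_{D_{r-1}}$ and summing by parts. This rewrites the inner sum for null $j$ as
\[
  \sum_{r}\Prob(p_j\le q_r)\Bigl[\tfrac{1}{r}\Prob(D_r\mid p_j\le q_r)-\tfrac{1}{r+1}\Prob(D_r\mid p_j\le q_{r+1})\Bigr],
\]
with boundary terms handled as usual. PRDS means $u\mapsto\Prob(D\mid p_j\le u)$ is nondecreasing for increasing $D$, so $\Prob(D_r\mid p_j\le q_r)\le \Prob(D_r\mid p_j\le q_{r+1})$. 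Combining this with super-uniformity, $\Prob(p_j\le q_r)\le q_r$, the sum telescopes to at most $q/m$. Summing over $j\in\mathcal H_0$ gives $m_0q/m$.

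I expect this step to be the main obstacle. PRDS is usually stated with conditioning on $p_j=u$, while the argument needs conditioning on $p_j\le u$. The standard route, following Lehmann's lemma as used by Benjamini--Yekutieli, shows that monotonicity in the equality-conditioning implies it for the cumulative conditioning. I would cite this rather than reprove it. For conformal $p$-values I would simply invoke Bates et al.~\cite{Bates2023}, who establish PRDS for split-conformal $p$-values sharing a calibration set. The theorem then applies directly to Eq.~\eqref{eq:cp}.

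\textbf{Arbitrary dependence.} Here I would use no structure on $R$ at all. The first step is to bound $\mathbf 1\{p_j\le q_R\}/R$ by $\sum_{k\ge1}\frac{1}{k}\mathbf 1\{q_{k-1}<p_j\le q_k\}\,\mathbf 1\{R\ge k\}$, which drops the event $R\ge k$ and replaces $q$ by the reduced level $q'=q/\sum_{i\le m} i^{-1}$. Taking expectations and rearranging gives
\[
  \E\bigl[\mathbf 1\{p_j\le q'_R\}/R\bigr]\le\sum_{k=1}^m\frac1k\,\Prob\bigl(q'_{k-1}<p_j\le q'_k\bigr).
\]
By Abel summation and super-uniformity this is at most $\sum_{k}\frac{q'}{m}\cdot\frac{m}{k}\cdot\frac{1}{m}$ per unit, i.e.\ $\le (q'/m)\sum_{k=1}^m 1/k = q/m$. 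Summing over the $m_0$ nulls yields $\FDR\le m_0q/m\le q$. This part is routine bookkeeping.
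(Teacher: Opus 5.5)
\textbf{The arbitrary-dependence half is fine; the PRDS half fails as written.}

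Your overall route is the one the paper defers to: Theorem~1.2 of Benjamini--Yekutieli for PRDS, and the harmonic-sum bound for arbitrary dependence. Your arbitrary-dependence half is sound. The inequality $\E[\mathbf 1\{p_j\le q'_R\}/R]\le\sum_{k=1}^m\frac1k\Prob(q'_{k-1}<p_j\le q'_k)$, followed by Abel summation against the nonnegative weights $\frac1k-\frac1{k+1}$ and super-uniformity, bounds the per-null contribution by $\frac{q'}{m}\sum_{k\le m}k^{-1}=q/m$. This is a precise version of the paper's ``absorb the factor into the level.'' Your sentence about ``dropping the event $R\ge k$'' is garbled, but the inequality you then display is the right one.

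The PRDS half has two problems. First, your summation-by-parts formula is not an identity. Reindexing $\sum_r\frac1r\Prob(p_j\le q_r,\,D_{r-1})$ gives $\sum_r\frac1{r+1}\Prob(p_j\le q_{r+1})\,\Prob(D_r\mid p_j\le q_{r+1})$, so the subtracted term carries $\Prob(p_j\le q_{r+1})$, not $\Prob(p_j\le q_r)$.

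Second, and more seriously, once you have summed by parts you can no longer use super-uniformity. The terms $\frac1{r+1}\Prob(p_j\le q_{r+1},D_r)$ enter with a minus sign, so bounding them would need a \emph{lower} bound on $\Prob(p_j\le q_{r+1})$. The prefactors $\frac1r\Prob(p_j\le q_r)$ collapse to a common $q/m$ only for exactly uniform nulls, whereas conformal $p$-values are merely super-uniform (Prop.~\ref{prop:unif}).

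In your displayed form, the bracket $\frac1r\Prob(D_r\mid p_j\le q_r)-\frac1{r+1}\Prob(D_r\mid p_j\le q_{r+1})$ has no definite sign. Inserting your PRDS inequality and $\Prob(p_j\le q_r)\le q_r$ only bounds the sum by $\frac qm\sum_r\frac1{r+1}\Prob(D_r\mid p_j\le q_{r+1})$. That is a harmonic-type quantity, not $q/m$, so nothing telescopes.

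\textbf{The repair.} Reverse the order, as Benjamini--Yekutieli do, and apply super-uniformity termwise \emph{before} telescoping.

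Since conditional probabilities are nonnegative, $\Prob(p_j\le q_r,\,R=r)=\Prob(p_j\le q_r)\,\Prob(R=r\mid p_j\le q_r)\le q_r\,\Prob(R=r\mid p_j\le q_r)$. So the inner sum for null $j$ is at most $\frac qm\sum_{r=1}^m\Prob(R=r\mid p_j\le q_r)$.

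Now write $\Prob(R=r\mid p_j\le q_r)=\Prob(D_r\mid p_j\le q_r)-\Prob(D_{r-1}\mid p_j\le q_r)$. Use the cumulative PRDS inequality in the form $\Prob(D_{r-1}\mid p_j\le q_{r-1})\le\Prob(D_{r-1}\mid p_j\le q_r)$. The sum then telescopes to $\Prob(D_m\mid p_j\le q_m)=1$. The $r=1$ boundary term vanishes because $p_j\le q_1$ forces $R\ge1$. This gives $q/m$ per null and $m_0q/m$ in total.

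The ingredients you identified are exactly what this argument uses:
\begin{itemize}
\item $\{R\le r\}$ is an increasing set in the full $p$-value vector. That is all the BY definition of PRDS requires, so you need not pass to leave-one-out events.
\item PRDS must be upgraded from conditioning on $p_j=u$ to conditioning on $p_j\le u$.
\end{itemize}
These are what the paper's one-line sketch means by ``a monotonicity property of the rejection event under PRDS.''
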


\begin{proof}[Proof sketch]
The PRDS case is Theorem~1.2 of ref.~\cite{Benjamini2001}; the key step replaces the independence argument above by a monotonicity property of the rejection event under PRDS. The arbitrary-dependence case follows by bounding $\sum_{r} 1/r \le \sum_{i=1}^m 1/i$ and absorbing the factor into the level, yielding the stated correction~\cite{Benjamini2001}.
\end{proof}

Theorem~\ref{thm:bh} is the foundational case. It establishes the false discovery rate bound of Eq.~\eqref{eq:bhbound} and the leave-one-out argument behind it, and it is the result that Theorem~\ref{thm:by} extends. However, its independence hypothesis does not hold for the $p$-values we use. Split-conformal $p$-values computed against a shared calibration set are ranked against the same reference scores and are positively dependent. This positive dependence is not an obstacle to error control but the structure that Theorem~\ref{thm:by} requires. Ref.~\cite{Bates2023} proves that conformal $p$-values built from a common calibration set satisfy the PRDS property, so the unmodified procedure in Eq.~\eqref{eq:bh} still controls the false discovery rate at the bound in Eq.~\eqref{eq:bhbound} for the plain split-conformal construction. We therefore retain Theorem~\ref{thm:bh} as the baseline that fixes the bound and isolates the proof, and take the operative guarantee from Theorem~\ref{thm:by} once the dependence is accounted for. For the weighted, Mondrian, or overlapping-window constructions of Sections~\ref{sec:weighted}--\ref{sec:gv}, where PRDS is not established for the exact estimator, the Benjamini--Yekutieli correction still controls the false discovery rate under arbitrary dependence, at the cost of a $\sum_{i=1}^m i^{-1}$ factor in the level. Control of the false discovery rate is thus available throughout. We treat the Benjamini--Hochberg shortlist as a complementary region-selection tool rather than the headline discovery statistic, since the worked example of Sec.~\ref{sec:lhco} reports a scan significance, and FDR control over the discrete window family is an optional output that bounds the expected fraction of spurious regions on any shortlist drawn from it.

\begin{figure}[th]
  \centering
  \subfigure[A single Benjamini--Hochberg triage at target level $q=0.10$: the sorted conformal $p$-values (grey) cross the Benjamini--Hochberg line $iq/m$ (red) just past the ten true signal regions (green), setting the cut (black dashed); the rule shortlists the signal regions.\label{fig:bh}]{\includegraphics[width=0.95\linewidth]{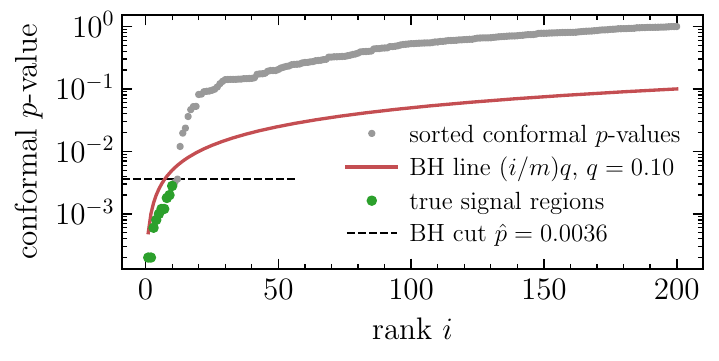}}

  \subfigure[Mean realised false discovery proportion (red) and mean power (blue) over $2000$ experiments per target level, against $q$; the mean FDP lies at or below the line $\mathrm{FDR}=q$ (black dashed) at every level, as Theorems~\ref{thm:bh}--\ref{thm:by} require, while the power saturates once $q$ is large enough to admit the signal regions. The shaded band is $\pm1$ standard deviation of the false discovery proportion across experiments.\label{fig:fdr-power}]{\includegraphics[width=0.95\linewidth]{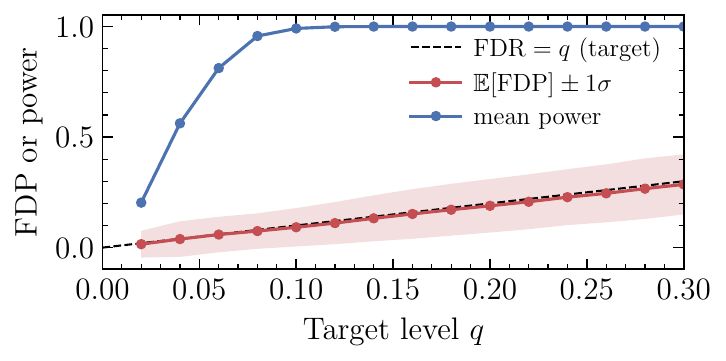}}

  \caption{Benjamini--Hochberg false-discovery-rate control on a mock multi-region search: $m=200$ candidate regions, ten carrying a moderate signal, each scored by a conformal $p$-value against a common background calibration set drawn from the gamma mock model of Fig.~\ref{fig:calibration}.}
  \label{fig:bh-fdr}
\end{figure}

Fig.~\ref{fig:bh-fdr} demonstrates the procedure on a mock multi-region search. We draw $m=200$ candidate regions, ten of which carry a moderate signal (a shift of the gamma score of Fig.~\ref{fig:calibration}) and the rest are background-only, compute one conformal $p$-value per region against a common background calibration set, and apply Eq.~\eqref{eq:bh} at $q=0.1$. A single triage (Fig.~\ref{fig:bh}) shortlists the ten signal regions (green) with two false positives (grey); the points lie underneath the BH line (solid red). Repeating the experiment $2000$ times per target level (Fig.~\ref{fig:fdr-power}) shows the mean false discovery proportion tracking the line $\mathrm{FDR}=q$ from below at every $q$ (red), with the power (blue) rising to one once $q$ admits the signal regions.

This is the setting in which Benjamini--Hochberg is the natural output. Many candidate regions, each a separate discovery hypothesis, from which one wants a shortlist with a controlled false fraction. The worked example of Sec.~\ref{sec:lhco} is not of this kind, which is why we will not apply Benjamini--Hochberg there. It targets a single resonance, so the quantity of interest is a global, look-elsewhere-corrected significance for one excess (Sec.~\ref{sec:gv}), not a false discovery rate over a family of candidates. Moreover, the scan windows there are strongly correlated, and a genuine signal spans several of them, so a shortlist would flag a contiguous block that still corresponds to one resonance. Hence, the BH-FDR procedure provides a generalisation over the methods we will deploy in Sec.~\ref{sec:lhco}.

\subsection{Exchangeability}
\label{sec:weighted}

The conformal guarantee relies on exchangeability between the calibration and test scores. Resonant anomaly searches break it for a reason unrelated to the signal. Such searches single out a \emph{resonant variable} in which a new state would appear as a localised bump, and define a signal region (SR) and sidebands (SB) as intervals in that variable. The anomaly score is built from the \emph{other} event features by a classifier trained to flag SR-like events relative to the SB. Calibration scores are then taken from one region and test scores from another. The difficulty is that, for background, the features feeding the score are generally correlated with the resonant variable, so the SR and SB populate different parts of feature space and the \emph{background} score itself differs between them, $p_{\rm bkg}(s|{\rm SB})\neq p_{\rm bkg}(s|{\rm SR})$, even with no signal present. The weakly-supervised classifier amplifies this. Trained to separate the regions, it latches onto the background features that track the resonant variable. Were the score decorrelated from that variable~\cite{Shimmin:2017mfk, Kasieczka:2020yyl}, the background would coincide across the regions, and exchangeability would hold. However, life is not that simple, and the failure shows up directly in the background, as we confirm on real data in Sec.~\ref{sec:lhco}. Theorem~\ref{thm:cp} then becomes garbage, and Prop.~\ref{prop:unif} merely \emph{detects} the failure. The remedy is to model the failure as a covariate shift and correct it.

Suppose the calibration scores are drawn from $P$, and the test (SR background) scores from $Q$, with a known likelihood ratio $w(x)=\mathrm{d}Q/\mathrm{d}P(x)$. Following ref.~\cite{Tibshirani2019}, we define the weighted conformal
$p$-value
\begin{equation}
  \pw(s) \;=\;
  \frac{w(x^\star) + \sum_{i=1}^n w(x_i)\,\mathbf 1\{s_i \ge s\}}
       {w(x^\star) + \sum_{i=1}^n w(x_i)} ,
  \label{eq:wcp}
\end{equation}
where $x^\star$ is the test covariate (e.g.\ the SR position) and $x_i$ the
calibration covariates. Eq.~\eqref{eq:wcp} replaces the uniform weight
$1/(n+1)$ of Eq.~\eqref{eq:cp} by normalised likelihood-ratio weights.

\begin{theorem}[Validity under covariate shift; \cite{Tibshirani2019}]
\label{thm:wcp}
If the calibration and test scores are \emph{weighted exchangeable} with weight function $w=\mathrm{d}Q/\mathrm{d}P$, then for a test point drawn from $Q$ and any $\alpha\in(0,1)$, $\Prob\bigl(\pw(s)\le\alpha\bigr)\le\alpha$.
\end{theorem}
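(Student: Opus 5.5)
We follow the conditioning-on-the-bag argument of Ref.~\cite{Tibshirani2019}, which reduces the statement to a weighted analogue of the rank argument in the proof of Theorem~\ref{thm:cp}. Set $Z_i=(x_i,s_i)$ for $i=1,\dots,n$ and $Z_{n+1}=(x^\star,s)$. Weighted exchangeability means the joint density factorises as $f(z_1,\dots,z_{n+1})=\prod_{i=1}^{n+1}w_i(z_i)\,g(z_1,\dots,z_{n+1})$ with $g$ symmetric. In our covariate-shift setting this holds with $w_i\equiv 1$ for $i\le n$ and $w_{n+1}=w=\mathrm{d}Q/\mathrm{d}P$: the calibration points are i.i.d.\ from $P$, the test point is from $Q$, and $g=\prod_i \mathrm{d}P(z_i)$ is symmetric. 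The first step is to record this factorisation explicitly.

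Next we condition on the event $E_z$ that the unordered multiset $\{Z_1,\dots,Z_{n+1}\}$ equals a fixed bag $\{z_1,\dots,z_{n+1}\}$. Summing the joint density over the permutations $\sigma$ that are consistent with the bag, and using the symmetry of $g$, gives
\[
\Prob\bigl(Z_{n+1}=z_j \,\big|\, E_z\bigr)=\frac{\sum_{\sigma:\sigma(n+1)=j}\prod_i w_i(z_{\sigma(i)})}{\sum_{\sigma}\prod_i w_i(z_{\sigma(i)})}=\frac{w(x_j)}{\sum_{k=1}^{n+1}w(x_k)}=:p_j^w .
\]
The conditional law of the test score given the bag is therefore the weighted empirical distribution $\sum_j p_j^w\delta_{s_j}$. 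The point is that only the test slot carries a nontrivial weight, so the products collapse to a single factor $w(x_j)$. In the tie-free case this is a clean cancellation.

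It remains to compare events. Define the conditional upper-tail function $G(t)=\sum_j p_j^w\mathbf 1\{s_j\ge t\}$, which is a function of the bag alone. For the realised test score, $\pw(s)$ in Eq.~\eqref{eq:wcp} is at least $G(s)$. Its numerator counts the test point's own weight $w(x^\star)$ together with the calibration points having $s_i\ge s$, which reproduces exactly the terms of $G$ with $s_j\ge s$. The denominator is the full bag weight. Hence $\pw(s)=G(s)$ when the bag is evaluated at the realised configuration, and so $\{\pw(s)\le\alpha\}=\{G(s)\le\alpha\}$.

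For a random variable $S$ with law $\sum_j p_j^w\delta_{s_j}$, the variable $G(S)=\Prob(S'\ge S)$, with $S'$ an independent copy, is super-uniform: $\Prob(G(S)\le\alpha)\le\alpha$. This is the standard fact that the survival function evaluated at the variable itself dominates a uniform, and it holds with or without atoms. It is the weighted counterpart of the rank lemma in Theorem~\ref{thm:cp}. We then obtain $\Prob(\pw(s)\le\alpha\mid E_z)\le\alpha$ for every bag, and integrating over bags gives the claim.

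The main obstacle is the conditioning step in the presence of ties or repeated $z$ values, together with making precise sense of "conditioning on the bag" for continuous $Z$. I would handle this as in Ref.~\cite{Tibshirani2019}: work with densities with respect to a product reference measure, and treat repeated values by summing over permutations, which only merges atoms of the weighted empirical law. The super-uniformity of $G(S)$ is unaffected by merging atoms, so the tie-handling argument from Theorem~\ref{thm:cp} carries over.
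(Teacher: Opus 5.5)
Your proposal is correct and follows the same route as the paper's proof, which only sketches the argument of Ref.~\cite{Tibshirani2019}: you condition on the bag, identify the test point's conditional law as the weighted empirical distribution with masses $p_j^w = w(x_j)/\sum_k w(x_k)$, observe that $\pw(s)$ is that law's upper-tail function evaluated at the test score, and finish with the super-uniformity (rank) argument of Theorem~\ref{thm:cp} applied to the reweighted measure. Your version supplies the details the paper defers to the reference; the only slip is that you first call $\pw(s)$ ``at least'' $G(s)$ before correctly concluding the equality $\pw(s)=G(s)$, and equality is what the argument uses.
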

\begin{proof}
Ref.~\cite{Tibshirani2019} show that under covariate shift, the tuple $(s_1,\dots,s_n,s)$ is exchangeable after reweighting each ordering of the augmented sample by the product of its covariate likelihood ratios. The normalised weights $p_i^w = w(x_i)/\sum_{j} w(x_j)$ (with the test point included) are exactly the probabilities that the test point attains each rank. Summing the weights of the orderings in which the designated point ranks at or above $s$ gives Eq.~\eqref{eq:wcp}, and the same rank-uniformity argument as in Theorem~\ref{thm:cp}, applied to the reweighted measure, yields the bound. For a complete proof, we refer the reader to ref.~\cite{Tibshirani2019} and for the general non-exchangeable bound with estimation error, ref.~\cite{Barber2023shift}.
\end{proof}

Two statements must be kept apart here. Theorem~\ref{thm:wcp} is exact only under weighted exchangeability with the \emph{true} likelihood ratio $w=\mathrm{d}Q/\mathrm{d}P$. It is a statement about the estimator Eq.~\eqref{eq:wcp} fed the correct weights, not about any particular way of obtaining them. ``How on earth would I know the likelihood ratio'', you ask? The procedure we run estimates $w$ from data, which is a separate object with its own coverage properties. Learning the weight from the observed signal region can absorb part of a genuine signal and alter power, an effect we measure directly in Sec.~\ref{sec:lhco_power}. To keep the estimate from contaminating the calibration scores, the weight model should be fit on a sample disjoint from the calibration and test scores, that is, cross-fitted or sample-split. We adopt this split in the worked example. The residual validity cost of an imperfect estimate is then controlled rather than eliminated. Ref.~\cite{Barber2023shift} bounds the coverage gap by the total-variation distance between the reweighted and true test distributions, so a good but imperfect estimate yields a small, bounded validity loss, and we report the empirical coverage of the estimated-weight procedure across calibration sizes and contamination levels in Secs.~\ref{sec:lhco_power} and \ref{sec:lhco_calsize}.

In CATHODE or similar density-based methods, the SR background density is learned by a normalising flow trained on the sidebands and interpolated into the SR. The ratio $w$ is therefore available \emph{as a by-product of the search itself}, as the sideband-to-SR density ratio evaluated on the resonant variable.

\begin{figure}[t]
  \centering
  \includegraphics[width=0.95\linewidth]{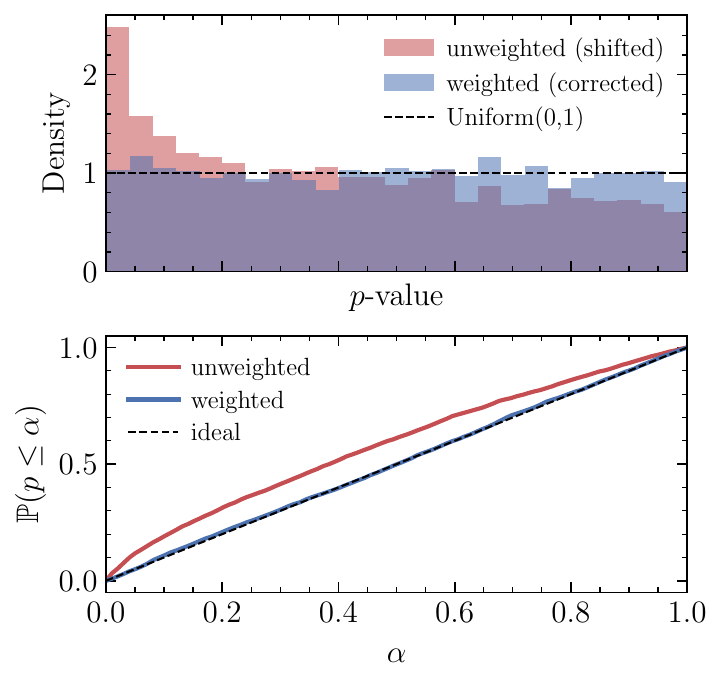}
  \caption{Weighted CP under covariate shift, no signal present. Upper: histograms of $\hat p$ (unweighted, piled at small values) and $\pw$ (weighted, uniform). Lower: exceedance curves. The red unweighted curve bows above the diagonal, the signature of invalidity; the blue weighted curve returns to the diagonal, restoring validity by reweighting.}
  \label{fig:weighted}
\end{figure}

Fig.~\ref{fig:weighted} is built on the mock model of Fig.~\ref{fig:calibration} with no signal injected. The calibration scores are drawn from the gamma background distribution $P$ of Sec.~\ref{sec:cp}, and the test (signal-region) background from $Q$, the same distribution rescaled by $25\%$, a controlled stand-in for the sideband-to-SR drift that the substructure--mass correlation produces on data. Because both distributions are known for this toy, the weight $w(s)=\mathrm{d}Q/\mathrm{d}P(s)$ is the exact likelihood ratio of the two gammas, and Eq.~\eqref{eq:wcp} reweights the calibration sample by it. The upper panel shows the resulting $p$-values, the unweighted $\hat p$ of Eq.~\eqref{eq:cp} (red), which piles up at small values because the drift makes background test scores look anomalous, against the weighted $\pw$ of Eq.~\eqref{eq:wcp} (blue), which is flat. The lower panel shows the corresponding exceedance curves. The unweighted exceedance breaches the bound ($\Prob(\hat p\le0.05)\approx0.10$, double the nominal), while the weighted version sits on the diagonal ($\approx0.045$). To probe robustness to an imperfect weight, we repeat the experiment with the exact ratio perturbed by multiplicative log-normal noise up to $\sigma=0.5$, and the weighted exceedance stays within sampling error of nominal, confirming the bounded-degradation prediction of Ref.~\cite{Barber2023shift}.

\subsection{Mondrian calibration}
\label{sec:mondrian}

Under exchangeability, Theorem~\ref{thm:cp} guarantees that the sample satisfies marginal coverage, that is, across the entire sample. However, this leaves shape-dependent observables rather exposed, since coverage is not guaranteed locally. Weighting corrects a smooth global shift. A complementary failure is \emph{heterogeneity}. The background score varies from bin to bin along the resonant variable, e.g. dijet mass. A single pooled calibration set can then be valid \emph{marginally} yet badly miscalibrated \emph{within} each bin, exactly the regime that produces localised bumps. Mondrian (group-conditional) CP~\cite{Vovk2013mondrian} calibrates separately within each bin.

\begin{definition}[Mondrian conformal $p$-value]
Partition the data into taxonomy cells (e.g. $m_{jj}$ bins) indexed by $b(x)$. For a test point in cell $b$ with calibration scores $\{s_i : b(x_i)=b\}$ of size $n_b$, define $\hat p_b(s)$ by Eq.~\eqref{eq:cp} using only that cell.
\end{definition}

\begin{proposition}[Bin-conditional validity; \cite{Vovk2013mondrian}]
\label{prop:mondrian}
If the test point is exchangeable with the calibration scores \emph{within its own cell} $b$, then $\Prob(\hat p_b(s)\le\alpha \mid b)\le\alpha$ for every cell, and hence also marginally.
\end{proposition}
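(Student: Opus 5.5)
The plan is to reduce the bin-conditional statement to Theorem~\ref{thm:cp}, applied separately inside each taxonomy cell, and then recover the marginal statement by averaging over cells.

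\textbf{Conditioning on the cell.} Fix a cell $b$. We condition on the event that the test point falls in cell $b$ and on the cell memberships $b(x_1),\dots,b(x_n)$ of the calibration points. In particular we condition on the cell size $n_b$, so that it is non-random in what follows. The hypothesis of Prop.~\ref{prop:mondrian} is then read as saying that, given these memberships, the $n_b+1$ scores $\{s_i : b(x_i)=b\}\cup\{s\}$ are exchangeable. For i.i.d.\ data this follows directly: conditioning i.i.d.\ pairs $(x_i,s_i)$ on their cell labels leaves the scores in cell $b$ i.i.d.\ from the cell-conditional law, and the test score has that same law.

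\textbf{Applying the rank argument.} By construction, $\hat p_b(s)$ is exactly Eq.~\eqref{eq:cp} computed on this reduced sample of size $n_b$. Theorem~\ref{thm:cp} therefore applies verbatim under the conditional measure. The rank $R_b$ of $s$ among the $n_b+1$ cell scores, counted from the top, satisfies $\hat p_b(s)=R_b/(n_b+1)$. Under exchangeability this rank is uniform (super-uniform once ties are allowed), which gives
\[
\Prob\bigl(\hat p_b(s)\le\alpha \mid b,\, n_b\bigr)\le\alpha .
\]
Integrating over the conditional law of $n_b$ (tower property) removes the conditioning on $n_b$ and yields $\Prob(\hat p_b(s)\le\alpha\mid b)\le\alpha$.

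\textbf{Marginal statement.} This follows from the law of total probability:
\[
\Prob(\hat p\le\alpha)=\sum_b \Prob(b)\,\Prob(\hat p_b(s)\le\alpha\mid b)\le\alpha\sum_b\Prob(b)=\alpha .
\]

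\textbf{Main obstacle.} The subtle step is making the conditioning legitimate, namely justifying that exchangeability survives when we condition on the random cell assignment and the resulting random $n_b$. For i.i.d.\ data we will argue via the factorisation of the joint law given the labels, as sketched above. In the general case we will take within-cell exchangeability conditional on the labels as the precise meaning of the hypothesis.

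One degenerate case remains. If $n_b=0$, Eq.~\eqref{eq:cp} gives $\hat p_b=1$, so the bound holds trivially.
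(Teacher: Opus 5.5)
Your proposal is correct and takes the same route as the paper: apply Theorem~\ref{thm:cp} conditionally within the cell using only the within-cell calibration scores, then average the conditional bound over cells to get marginal validity. Your explicit conditioning on the cell labels and $n_b$ (integrated out by the tower property), and your handling of the $n_b=0$ case, only spell out details the paper's short proof leaves implicit.
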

\begin{proof}
Apply Theorem~\ref{thm:cp} conditionally on the cell, using only the within-cell calibration scores; the within-cell exchangeability hypothesis is weaker than global exchangeability. Marginal validity is followed by averaging the conditional bound over cells. This is the conditional-validity result of ref.~\cite{Vovk2013mondrian}.
\end{proof}

Two conditions are needed for the bound to mean what it says. The cells $b(x)$ must be fixed independently of the test outcomes. A taxonomy chosen after inspecting the scores, for instance, binning to isolate an apparent excess, voids the conditional guarantee. In the resonant setting, the cells are intervals of $m_{jj}$ fixed before unblinding, which satisfies this. Second, conditioning costs calibration statistics. Each cell carries only its own $n_b$ calibration scores, the resolution floor in a cell is $1/(n_b+1)$, and very fine binning trades validity for variance by emptying the cells. Bin widths must therefore be set by the smallest $p$-value a cell must resolve.

\begin{figure}[th]
  \centering
  \includegraphics[width=\linewidth]{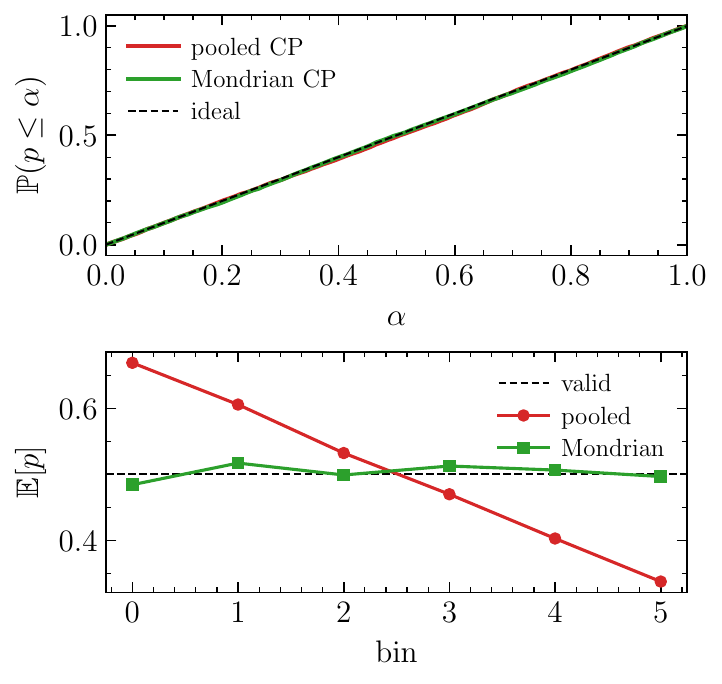}
  \caption{Mondrian vs pooled calibration across six resonant-variable bins. Upper: marginal exceedance (both look valid). Lower: per-bin mean $p$-value; the pooled estimator (red) drifts far from $0.5$ bin-by-bin while Mondrian (green) stays at $0.5$.}
  \label{fig:mondrian}
\end{figure}

Fig.~\ref{fig:mondrian} illustrates the failure of coverage in the binned sample. The pooled calibration is constructed across six bins, with the background level rising with bin index. The top panel shows that the \emph{marginal} exceedance of the pooled estimator lies on the diagonal, so a naïve marginal check would pronounce it valid. Bottom: the \emph{per-bin} mean $p$-value of the pooled estimator ranges from $0.66$ in the lowest bin to $0.33$ in the highest, indicating gross conditional miscalibration, whereas the Mondrian estimator holds at $0.5$ in every bin. 

In a previous study, we also demonstrated the failure of local coverage across classification, anomaly detection, and generative modelling for HEP applications, and noted that the coverage guarantee does not hold locally for a given distribution~\cite{Araz:2025vuw}. Although local coverage is not necessary for judging the quality of the overall sample, shape-dependent observables, such as energy-based observables in effective field theories, will require a more local coverage guarantee to preserve sensitivity in the tail of the distribution. Hence, it is essential to apply Mondrian calibration for analyses that require local sensitivity. Marginal validity can mask conditional invalidity, and it is the conditional failure that produces spurious bumps in the distribution.

\subsection{Robust calibration under a contaminated reference}
\label{sec:robust}

Unfortunately, designing a control region is not trivial. In many cases, the signal can leak into the control or validation regions, which are already present in numerous searches even without machine-learning intervention. This has been rigorously documented in LHC reinterpretation literature (e.g. see refs.~\cite{Evans:2013jna, Blanke:2015ulx, Altakach:2024jwk, Alguero:2022gwm}). Hence, leakage has to be accounted for. Let the calibration sample contain a fraction $\epsilon$ of outliers (signal), which might inflate the Type-I error rate. The opposite is true in the non-adversarial regime.

\begin{theorem}[Conservative validity under contamination; \cite{Bashari2025}]
\label{thm:robust}
Let the calibration set be drawn from the mixture $(1-\epsilon)P_0+\epsilon P_1$, with $P_1$ stochastically larger than $P_0$ in the anomaly score (i.e., the signal is more anomalous). Then, for a background test point, the conformal $p$-value of Eq.~\eqref{eq:cp} satisfies $\Prob(\hat p(s)\le\alpha)\le\alpha$; the procedure remains valid and is, in general, \emph{conservative}.
\end{theorem}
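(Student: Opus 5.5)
The plan is a coupling argument that reduces the contaminated case to the clean case of Theorem~\ref{thm:cp}. Represent the contaminated calibration sample explicitly. Draw i.i.d.\ labels $Z_1,\dots,Z_n\sim\mathrm{Bernoulli}(\epsilon)$. Draw i.i.d.\ clean scores $u_1,\dots,u_n\sim P_0$ and anomalous scores $v_1,\dots,v_n\sim P_1$. Set $s_i = v_i$ if $Z_i=1$ and $s_i=u_i$ otherwise. Let the test score be $s\sim P_0$, drawn independently of everything else.

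The first key step uses stochastic dominance. If $P_1 \succeq P_0$ in the usual order, there is a monotone coupling (quantile coupling $v_i = F_1^{-1}(F_0(u_i))$ with randomisation at atoms) such that $v_i\ge u_i$ almost surely and each $v_i$ still has law $P_1$. Under this coupling, $s_i \ge u_i$ holds for every $i$ pointwise. The joint law of $(s_1,\dots,s_n,s)$ is unchanged, because $(Z_i,u_i)$ are i.i.d.\ and independent of $s$.

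The second step is a pointwise monotonicity of Eq.~\eqref{eq:cp}. The map $(s_1,\dots,s_n)\mapsto 1+|\{i: s_i\ge s\}|$ is nondecreasing in each calibration score. So $s_i\ge u_i$ gives
\[
\hat p(s) \;=\; \frac{1+|\{i:s_i\ge s\}|}{n+1} \;\ge\; \frac{1+|\{i:u_i\ge s\}|}{n+1} \;=:\; \hat p_0(s).
\]
Here $\hat p_0$ is the clean conformal $p$-value. The scores $u_1,\dots,u_n,s$ are i.i.d.\ from $P_0$ and hence exchangeable, so Theorem~\ref{thm:cp} gives $\Prob(\hat p_0(s)\le\alpha)\le\alpha$. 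Since $\{\hat p(s)\le\alpha\}\subseteq\{\hat p_0(s)\le\alpha\}$, the bound follows. Strict dominance on a set of positive probability makes the inequality $\hat p\ge\hat p_0$ strict with positive probability, which gives the qualitative claim that the procedure is \emph{conservative}.

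The main obstacle is the rigorous construction of the coupling when $P_0$ or $P_1$ has atoms, because scores from a finite-capacity network may tie. The first thing I would try is the randomised probability integral transform $U=F_0(u^-)+W\,(F_0(u)-F_0(u^-))$ with $W$ uniform on $(0,1)$, followed by $v=F_1^{-1}(U)$. This $v$ has law $P_1$, and since $F_1\le F_0$ pointwise it satisfies $v\ge u$. A fallback is Strassen's theorem, which directly gives a coupling with $v\ge u$ almost surely.

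A second point to flag is that "stochastically larger" must mean first-order dominance, $F_1(t)\le F_0(t)$ for all $t$, rather than a mere shift in mean. The argument fails otherwise, since a $P_1$ with lighter upper tail could shrink the counts.
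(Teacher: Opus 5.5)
Your proposal is correct and is essentially the paper's own proof: the paper likewise couples each contaminated calibration score above a clean $P_0$ counterpart via first-order dominance, observes that the tail count in Eq.~\eqref{eq:cp} can then only grow so that $\hat p(s)\ge\hat p_0(s)$ pointwise, and concludes by applying Theorem~\ref{thm:cp} to the clean sample; your mixture-label and quantile-coupling construction just makes explicit what the paper leaves implicit. One small overreach is your remark that strict dominance forces $\hat p>\hat p_0$ with positive probability: for $P_0=\delta_0$, $P_1=\delta_1$ one has $\hat p=\hat p_0=1$ identically, since you also need $P_0$ to charge the gaps $(u_i,v_i]$, and strictness at a given $\alpha$ further needs $\{\hat p_0\le\alpha<\hat p\}$ to have positive probability. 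The paper itself only claims the first inequality is ``typically'' strict, though, so the theorem's bound is unaffected.
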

\begin{proof}[Proof idea]
The argument is a stochastic-dominance coupling. Fix a background test score $s$. Let $C_0=(s_1^0,\dots,s_n^0)$ be a clean calibration sample drawn i.i.d.\ from $P_0$ and $C=(s_1,\dots,s_n)$ the contaminated sample from $(1-\epsilon)P_0+\epsilon P_1$. First-order stochastic dominance $P_1\succeq P_0$ in the score means there is a coupling under which each contaminated draw is pointwise at least as large as its clean counterpart, $s_i\ge s_i^0$ almost surely. Under that coupling the upper-tail count cannot decrease, $|\{i:s_i\ge s\}|\ge|\{i:s_i^0\ge s\}|$, so by Eq.~\eqref{eq:cp} the contaminated $p$-value satisfies $\hat p(s)\ge\hat p_0(s)$ pointwise, hence $\hat p$ is stochastically larger than $\hat p_0$. Since the clean-reference $p$-value is super-uniform (Theorem~\ref{thm:cp}), $\Prob(\hat p(s)\le\alpha)\le\Prob(\hat p_0(s)\le\alpha)\le\alpha$, with the first inequality typically strict, so the procedure is conservative. The precise non-adversarial conditions and the accompanying power loss are quantified in ref.~\cite{Bashari2025}.
\end{proof}

The conservativeness depends on the stochastic-dominance premise and is not unconditional. If the contamination is adversarial, or if the leaked outliers populate \emph{low} scores so that $P_1\not\succeq P_0$, or if the contamination enters the score \emph{training} rather than only the calibration reference, the coupling above fails and the type-I rate can exceed $\alpha$. The result therefore certifies fail-safe behaviour only for the non-adversarial, high-score contamination characteristic of a resonance localised in $m_{jj}$, which leaks signal-like (high-score) events into the reference. We verify this regime empirically in Sec.~\ref{sec:lhco_power}.

Fig.~\ref{fig:contam} shows the effect of contamination. As the reference contamination grows from $0\%$ (blue) to $10\%$ (red), the exceedance curve on background test points sinks progressively \emph{below} the diagonal. At $5\%$ (green) contamination $\Prob(\hat p\le0.05)\approx0.009$, far under nominal.
\begin{figure}[t]
  \centering
  \includegraphics[width=0.95\linewidth]{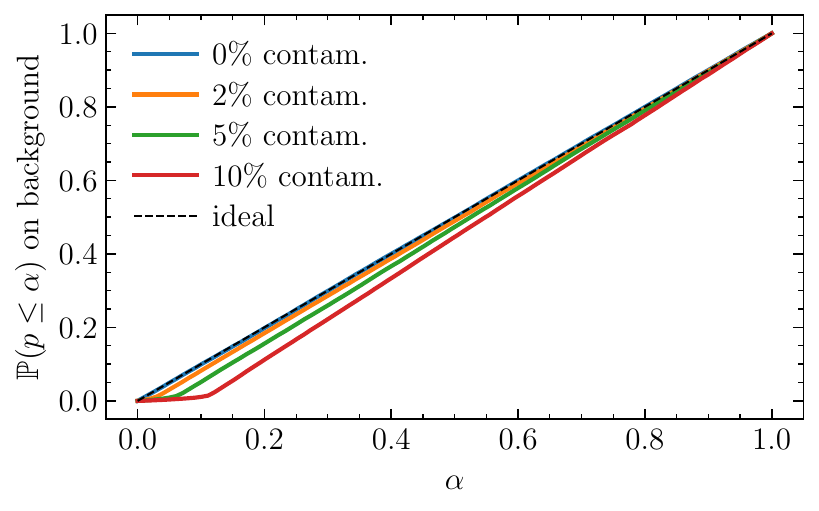}
  \caption{Effect of signal contamination of the reference set on background $p$-values.}
  \label{fig:contam}
\end{figure}
Signal leakage into the sideband costs sensitivity (power) but does \emph{not} invalidate the error control. The conformal layer fails safe. This is the sense in which finite-sample conformal validity earns its keep where the asymptotic local $p$-value gives no such guarantee, and it is why a modest, bounded power loss, rather than a silent false discovery, is the price of contamination. An active cleaning step can recover power if a labelling budget is available~\cite{Bashari2025}.

\subsection{The joint conformal \& Gross--Vitells layer}
\label{sec:gv}

We now combine the finite-sample-calibrated local $p$-values with a look-elsewhere correction. The construction treats the conformal $p$-value field across the resonant variable as the object to which the standard trials-factor theory is applied. We stress at the outset that the finite-sample conformal guarantee attaches to the local $p$-value construction of Sec.~\ref{sec:cp}. The global trials-factor step below is asymptotic, calibrated on background-only toys, and is \emph{not} itself a finite-sample-valid object.

\subsubsection{From $p$-field to a local-significance field}

Scan the resonant variable in windows $m~\in~\{1,\dots,M\}$. In window $m$, let the events carry conformal $p$-values (Mondrian, weighted as needed). Summarise the window by the count of conformally anomalous events $k_m=|\{p\le\alpha\}|$ out of $n_m$, and standardise against the background expectation $\alpha n_m$,
\begin{equation}
  Z(m) \;=\; \frac{k_m - \alpha n_m}{\sqrt{n_m\,\alpha(1-\alpha)}}\ .
  \label{eq:Zfield}
\end{equation}
Under background-only and valid per-window calibration, $\E[k_m]=\alpha n_m$ by Prop.~\ref{prop:unif}, so $Z(m)$ is mean-zero.\footnote{See also~\cite{Donoho_2004}. For the unfamiliar reader, $Z(m)$ is the good-old (observed $-$ expected) $/$ standard deviation, and the denominator is the standard deviation of the binomial distribution.} Its variance is \emph{not} the binomial $n_m\alpha(1-\alpha)$. The within-window conformal $p$-values share one calibration set and are positively dependent, so 
\begin{eqnarray}
   \mathrm{Var}(k_m)=c_m\,n_m\alpha(1-\alpha)
   \label{eq:var}
\end{eqnarray}
with an over-dispersion $c_m\ge1$ that we measure and correct in Sec.~\ref{sec:nullvar}.\footnote{Conditioning on the calibration set $C$, the within-window indicators $\mathbf 1\{\hat p_i\le\alpha\}$ are i.i.d.\ Bernoulli with rate $q_C=\Prob(\hat p\le\alpha\mid C)$, so by the law of total variance $\mathrm{Var}(k_m)=n_m\alpha(1-\alpha)+n_m(n_m-1)\,\mathrm{Var}(q_C)$ and $c_m=1+(n_m-1)\,\mathrm{Var}(q_C)/[\alpha(1-\alpha)]$, the design effect of the equicorrelation $\rho=\mathrm{Var}(q_C)/[\alpha(1-\alpha)]$ induced by the shared calibration set. For split conformal, $q_C$ is the coverage at the calibration $(1-\alpha)$-quantile; the standard variance of that empirical quantile gives $\mathrm{Var}(q_C)\approx\alpha(1-\alpha)/n_{\rm cal}$ under exchangeability, hence $c_m\approx1+n_m/n_{\rm cal}$, while a covariate shift rescales $\mathrm{Var}(q_C)$, which is why we measure $c_m$ on data in Sec.~\ref{sec:nullvar}.} Standardised by that variance, $Z(m)$ is approximately unit-variance. Smooth correlations between neighbouring windows make it a smooth random field, the discrete analogue of the $\chi^2$ field of ref.~\cite{Gross2010}.

\subsubsection{Gross--Vitells trials factor on the conformal field}

The global $p$-value of the scan maximum is bounded by the Gross--Vitells up-crossing formula~\cite{Gross2010, Vitells2011},
\begin{eqnarray}
  \Prob\!\Bigl(\max_m Z(m)\ge z\Bigr)&\nonumber \\ \le\;\Prob&\bigl(Z_{\mathrm{loc}}\ge z\bigr)
   + \E[N_{u}]\, e^{-(z^2-u^2)/2},
  \label{eq:gv}
\end{eqnarray}
where $\E[N_{u}]$ is the mean number of up-crossings of a reference level $u$ by $Z(m)$, estimated from a small set of background-only toys, and $\Prob(Z_{\mathrm{loc}}\ge z)$ is the local tail. The exponential factor is the trials factor. It converts a local significance into a global one using only the field's up-crossing rate, exactly as in the asymptotic profile-likelihood case, but here driven by the \emph{conformal} field in Eq.~\eqref{eq:Zfield}.

We state the global step as an asymptotic, toy-calibrated procedure rather than a finite-sample theorem. Per-window conformal validity controls the first two moments of the count field but does not, by itself, supply the smooth Gaussian random-field structure that the Gross--Vitells up-crossing theory requires. First and second moments are not sufficient for the random-field assumptions of ref.~\cite{Gross2010}. We therefore record the construction as the following procedure.

\begin{quote}
\emph{Procedure (asymptotic global $p$-value).}
Standardise the conformal count field $Z(m)$ using an independent background-only null calibration, so that under the background hypothesis it is mean-zero and approximately unit-variance with the over-dispersion of Eq.~\eqref{eq:var} absorbed (Sec.~\ref{sec:nullvar}). If the resulting scan field is empirically well approximated by the smooth random-field assumptions of ref.~\cite{Gross2010}, then Eq.~\eqref{eq:gv} gives an asymptotic trials-factor approximation, or upper bound, to the global false-positive probability of the scan. The finite-sample conformal guarantee applies to the local $p$-value construction (Theorem~\ref{thm:cp}, Propositions~\ref{prop:unif},~\ref{prop:mondrian}, Theorems~\ref{thm:wcp}/\ref{thm:robust}), not to the Gross--Vitells global $p$-value itself.
\end{quote}

In practice, we do not rely on the random-field approximation alone. We corroborate Eq.~\eqref{eq:gv} against the empirical global tail obtained from background-only toys (Fig.~\ref{fig:gv} below and on data in Sec.~\ref{sec:lhco_scan}), and quote the Gross--Vitells value only as an upper bound whose Gaussian-field assumptions are weakest in the extreme tail. A fully empirical alternative, calibrating the global null directly from independent background pseudoexperiments throughout the entire pipeline, is presented as a deferred study in the scope-and-limits discussion of Sec.~\ref{sec:limits}.

\begin{figure}[t]
  \centering
  \includegraphics[width=\linewidth]{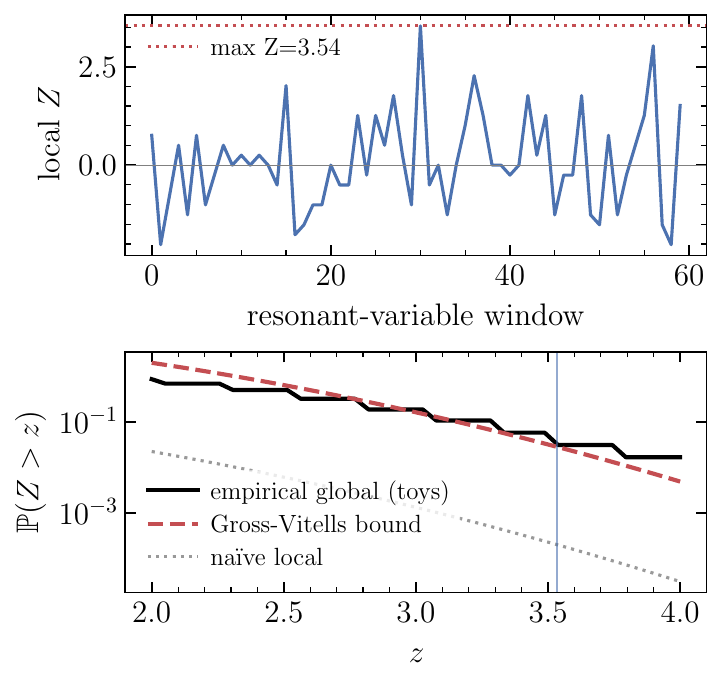}
  \caption{Joint conformal \& Gross--Vitells layer. Upper: conformal local-significance field $Z(m)$ of Eq.~\eqref{eq:Zfield} with an injected signal at window $30$. Lower: global vs local significance; the GV-bound Eq.~\eqref{eq:gv} (red dashed) matches the empirical global tail from toys (black) and lies far above the naïve local tail (grey). The reportable significance is the GV global value; the gap to the naïve local value is the trials factor ($\approx125$ here).}
  \label{fig:gv}
\end{figure}

Fig.~\ref{fig:gv} shows the pipeline end-to-end on a $60$-window scan with a moderate signal injected into one window. The upper panel shows the conformal local-significance field $Z(m)$, peaking at the injected window with $Z_{\max}=3.28$. The lower panel shows three curves on a log scale: the naïve local tail $\Prob(Z_{\mathrm{loc}}\ge z)$ (dotted black), the Gross--Vitells global bound Eq.~\eqref{eq:gv} calibrated with $\E[N_u]\approx12$ at $u=0.5$ (dashed red), and the empirical global tail from $2000$ background toys (solid black). The GV bound tracks the empirical global tail in the relevant range ($z\gtrsim2.5$), while the naïve local curve lies two orders of magnitude below. For the injected excess, a local $p\approx5\times10^{-4}$ ($Z\approx3.3$, naïvely ``evidence'') becomes a global $p\approx0.06$ once the scan over $60$ correlated windows is accounted for, a trials factor of $\approx 125$. This is the single most important output of the method; an honest global significance, neither the over-optimistic local value nor an uncontrolled shortlist.

\subsection{What does finite-sample validity add?}
\label{sec:disc}

Given that the Wilks theorem is accurate to a fraction of a $\sigma$ at LHC luminosities, finite-sample validity might seem like a marginal gain over the asymptotic local $p$-value. It can be argued that the real bottleneck is background mismodelling and systematics. However, the conformal layer's contribution is \emph{not} a sharper local $p$-value. Rather, the entire pipeline, from score to global significance, remains valid when its assumptions are stated and checked, and degrades \emph{gracefully and conservatively} when they are imperfect (Theorems~\ref{thm:wcp},~\ref{thm:robust}). The asymptotic local $p$-value has no analogous guarantee against background mismodelling. A biased background template inflates the local significance silently, whereas a mis-specified conformal calibration shows up as non-uniform background $p$-values (Prop.~\ref{prop:unif}) and, where it is a shift or a contamination, is either corrected (Section~\ref{sec:weighted}) or fails safe (Section~\ref{sec:robust}).\footnote{A ``background template'' is our model of what the background looks like, used as the null hypothesis. In a bump hunt, it's the expected count (or shape) in each bin under ``no signal.'' The asymptotic $p$-value asks: given my observed count $N$ and my assumed expected background $b_{\rm m}$, how surprising is $N$? A biased template means $b_{\rm m}$ is simply wrong. It doesn't match the true background $b_{\rm true}$ for reasons that have nothing to do with the signal. Causes include a mis-tuned Monte Carlo generator, the wrong parton distribution functions, detector mismodelling, an oversimplified fit function for the background shape, or an incorrect normalisation.} Fig.~\ref{fig:asym} makes this asymmetry explicit in a single-bin counting experiment with no signal present. The asymptotic local significance (Asimov significance) for an observed count $N$ against an assumed mean $b_{\rm m}$ is
\begin{equation}
    Z_{\rm asym} = \mathrm{sign}(N-b_{\rm m})\sqrt{2\left[N\ln(N/b_{\rm m}) - (N-b_{\rm m})\right]},
    \label{eq:asimov}
\end{equation}
while the conformal $p$-value calibrates $N$ against a reference set of background-only counts. We construct the toy as follows. A score cut is placed at the $90$th percentile of the (single-region) background score, and a ``window'' is a subsample of $n_{\rm win}=2000$ background events. The number passing the cut is $N$, with true background expectation $b_{\rm true}=n_{\rm win}\,\Prob(s\ge s_{\rm cut})$. We draw $4000$ such background-only windows and, separately, a large reference dataset of $2\times10^{4}$ background-only counts that serves as the conformal calibration set. The asymptotic test scores each window by Eq.~\eqref{eq:asimov} against an \emph{assumed} mean $b_{\rm m}=b_{\rm true}(1+\beta)$, where $\beta$ is a deliberate template bias, and reads off $p=\Prob(Z_{\rm asym}\ge z)$; the conformal test scores the same window by Eq.~\eqref{eq:cp} against the reference counts, which never consult $b_{\rm m}$. 

\begin{figure}[th]
  \centering
  \subfigure[False-positive rate at $\alpha=0.05$ on pure background vs template bias: the asymptotic rate (red) diverges as the template is underestimated, and the conformal rate (blue) stays nominal.\label{fig:asym-fpr}]{\includegraphics[width=0.95\linewidth]{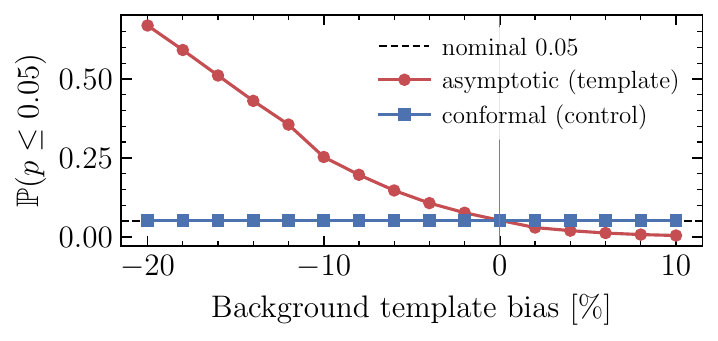}}
  \subfigure[$p$-value histograms at $-10\%$ bias: asymptotic piles up near zero (fake significance), conformal stays uniform.\label{fig:asym-pvals}]{\includegraphics[width=0.95\linewidth]{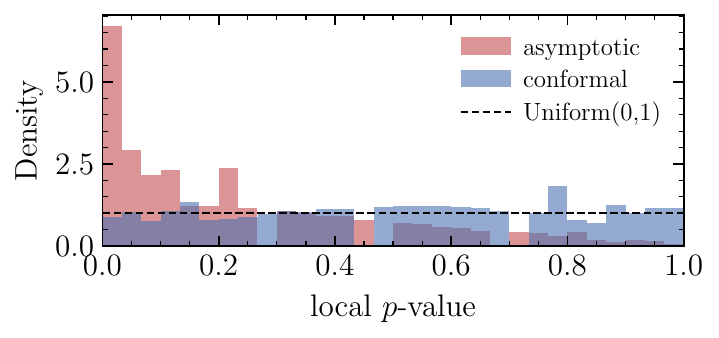}}
  \subfigure[The conformal self-diagnostic on true background: a biased control sample is exposed as a non-uniform exceedance curve, whereas a good control sits on the diagonal.\label{fig:asym-selfdiag}]{\includegraphics[width=0.95\linewidth]{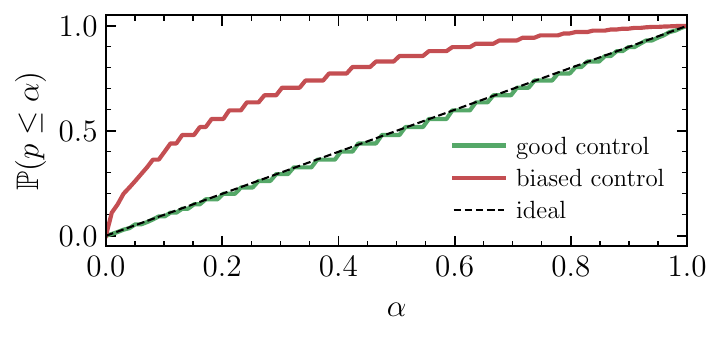}}
  \caption{Asymptotic vs conformal local $p$-values under background mismodelling, pure background, no signal. The dangerous, silent failure is the red curve in (a) on the underestimate side; conformal converts that hidden failure into the checkable non-uniformity of (c).}
  \label{fig:asym}
\end{figure}

Fig.~\ref{fig:asym-fpr} shows the background-only false-positive rate at $\alpha=0.05$ plotted against the template bias $\beta\in[-20\%,+10\%]$. The asymptotic rate (red) is correct only at $\beta=0$ and diverges as the template is underestimated. A $10\%$ underestimate turns roughly one in four background windows into a nominal $5\%$ ``excess''. The conformal rate (blue) remains flat at the nominal $0.05$ over the entire range because it calibrates against the background reference rather than a template. Fig.~\ref{fig:asym-pvals} shows the per-window local $p$-values with $-10\%$ bias. The asymptotic distribution (red) piles up near zero, the signature of manufactured significance, whereas the conformal distribution (blue) is uniform, as a valid $p$-value must be on background. Fig.~\ref{fig:asym-selfdiag} is the one honest caveat. If the conformal \emph{control} itself is biased, the conformal $p$-values inflate too. We plot the exceedance $\Prob(\hat p\le a)$ on a trusted background sample for a good and for a deliberately biased control. The biased control bows off the diagonal, exactly the non-uniformity of Prop.~\ref{prop:unif}, so the failure is \emph{visible}. The asymptotic method has no such self-check, since its only reference is the very template that is wrong.

Second, the recent demonstration that AD $p$-values are badly miscalibrated when training and testing share data~\cite{Hein2025} is a finite-sample, data-reuse pathology that no asymptotic formula addresses. The split/weighted/Mondrian conformal discipline is a direct response.

Third, the trials-factor layer of Sec.~\ref{sec:gv} is agnostic to whether the local significance is asymptotic or conformal. Our claim is only that pairing it with a conformal field yields a global significance that is simultaneously trials-factor-aware and equipped with the finite-sample robustness properties above. To our knowledge, no existing HEP pipeline delivers both at once.

\section{A worked example on LHC Olympics data}
\label{sec:lhco}

The preceding sections established each component on controlled toys, in which the ground truth is known and every assumption can be dialled in. We now demonstrate the full pipeline on the public collider dataset, the LHC Olympics 2020 R\&D dataset~\cite{Kasieczka2021}, using the resonant signal-region/sideband construction of CWoLa hunting and CATHODE~\cite{Collins2019cwola, Hallin2021cathode}. The goal is not to set a new sensitivity record but to show, on data with all the pathologies the toys idealised, that the conformal layer (i) exposes a calibration failure that the standard pipeline would not flag, (ii) repairs it with the label-free weighted construction of Section~\ref{sec:weighted}, and (iii) converts an uncalibrated score into an honest statement about the presence or absence of a localised excess.

\subsection{Dataset and preprocessing}
\label{sec:lhco_data}

The LHC Olympics R\&D dataset~\cite{Kasieczka2021} consists of simulated proton--proton collisions at $\sqrt{s}=13$~TeV. The background is generic QCD dijet production. The injected signal is a resonant cascade $Z'\to X\,Y$ with $X\to q\bar q$ and $Y\to q\bar q$, at masses $(m_{Z'},m_X,m_Y)=(3.5,\,0.5,\,0.1)$~TeV, so the signal is localised as a bump in the dijet invariant mass $m_{jj}$ near $3.5$~TeV. Each event is reduced to a small set of high-level jet features defined on the two leading large-radius jets, ordered by mass so that $m_{j_1}<m_{j_2}$:
\begin{equation}
  x = \bigl(m_{j_1},\; \Delta m_j \equiv m_{j_2}-m_{j_1},\;
            \tau_{21}^{j_1},\; \tau_{21}^{j_2}\bigr),
  \label{eq:lhco_features}
\end{equation}
where $\tau_{21}=\tau_2/\tau_1$ is the $n$-subjettiness ratio that discriminates two-pronged (signal-like) from one-pronged (QCD-like) jets. The resonant variable $m_{jj}$ is held separate from $x$ and used only to define regions, never as a classifier input, so that the anomaly score does not sculpt the $m_{jj}$ spectrum.

\paragraph{Signal region and sidebands.}
The signal region (SR) is the mass window $m_{jj}\in[3.3,3.7]$~TeV, chosen to bracket the $3.5$~TeV resonance. The sidebands (SB) are the complementary region in $m_{jj}$, signal-depleted and treated as a control sample of (nearly pure) background. Two families of resonant anomaly detection build a score on this geometry. CWoLa hunting~\cite{Metodiev:2017vrx, Collins2018, Collins2019cwola} trains a classifier to separate SR events from SB events directly, on features chosen to be (ideally) uncorrelated with $m_{jj}$, exploiting that the SR is a mixture with a larger signal fraction than the signal-depleted SB. CATHODE~\cite{Hallin2021cathode}, and with explicit likelihoods ANODE~\cite{Nachman2020}, instead trains a conditional density estimator on the SB, interpolates it into the SR, and \emph{samples} synthetic background events at the SR masses, then trains a classifier to separate the SR data from those samples. Sampling the background at the SR mass is what makes CATHODE robust to feature--mass correlations~\cite{Hallin2021cathode}. We adopt the relatively simple CWoLa-hunting realisation. Using the sideband events themselves as the background class leaves the SB$\to$SR background shift in place, which is the very exchangeability failure the conformal layer is designed to expose and repair.

For the input features, we use the preprocessed files distributed with the public CATHODE repository,\footnote{\url{https://github.com/HEPML-AnomalyDetection/CATHODE}}, which provide Eq.~\eqref{eq:lhco_features} already split into ``inner'' (SR) and ``outer'' (SB) samples and standardised in the CATHODE convention. Each feature is shifted and scaled to zero mean and unit variance using the SB statistics, and masses are expressed in TeV. The splits are an SR training pool of $6.0\times10^{4}$ events ($\sim$$0.6\%$ injected signal); an independent held-out SR evaluation pool of $6.2\times10^{4}$ events ($377$ signal events), from which the conformal calibration and test points of Sec.~\ref{sec:lhco_validity} are drawn; an SB reservoir of $3.8\times10^{5}$ nearly background-only events that serves as the conformal calibration source; and a pure-signal pool for the injection studies of Sec.~\ref{sec:lhco_power}.

\paragraph{Anomaly score.}
The score is the output of a binary classifier trained, in the CWoLa-hunting manner, to separate the SR pool (class $1$) from a background reference of $1.5\times10^{5}$ events subsampled at a fixed seed from the SB training pool (class $0$), using the four features of Eq.~\eqref{eq:lhco_features} only. $m_{jj}$ is never an input, so the score cannot sculpt the resonant spectrum directly. In the CWoLa limit, the Bayes-optimal classifier for this region-label problem is a monotonic function of the true signal-to-background likelihood ratio~\cite{Metodiev:2017vrx}, so separating SR from SB isolates the signal-like population with no per-event truth labels. We use a gradient-boosted decision tree rather than a deep network for a simple implementation; it is fast and essentially free of training stochasticity, easy-peasy.\footnote{scikit-learn (version 1.6.1)~\cite{scikit-learn} \texttt{HistGradientBoostingClassifier} with maximum iteration of 300, $0.08$ learning rate, max depth at $4$, L2 regularisation at $1$ and a $20\%$ early-stopping validation fraction.} The per-event score is $s=\hat p(\mathrm{class}=1\mid x)$, oriented so that larger is more SR-like, hence more signal-like. On the held-out SR pool (truth labels used only for this check), it reaches an area under the ROC curve of $0.875$ and a peak significance-improvement characteristic $\mathrm{SIC}\ (\epsilon_S/\sqrt{\epsilon_B})\approx1.95$, consistent with published weakly-supervised performance at this injection~\cite{Collins2019cwola, Hallin2021cathode}. The full CATHODE flow reaches a substantially higher SIC. On this benchmark, Refs.~\cite{Hallin2021cathode, Golling:2023yjq} report maximal SIC values of $\approx14$ for CATHODE, $\approx11$ for CWoLa hunting, and $\approx6.5$ for ANODE, against our $\approx2$, so the tree is deliberately and markedly weaker. Because the conformal layer acts on the score \emph{after} the classifier, it is agnostic to this choice, and a stronger classifier propagates through the identical machinery. We thus cleanly separate two questions: the classifier sets the raw sensitivity, while the conformal layer sets the calibration and the honesty of the final significance.

\subsection{Quantifying the uncertainty on an exceedance}
\label{sec:lhco_size}

Every validity number below is an exceedance estimate $\Prob(\hat p\le\alpha)$, a proportion measured over a finite test sample against a single calibration draw, and we quote it with a \emph{total} error from two independent sources combined in quadrature. The first is the binomial (test) component,
\begin{equation}
  \mathrm{SE}_{\rm bin} = \sqrt{\hat P(1-\hat P)/n_{\rm test}},
  \label{eq:se_bin}
\end{equation}
where $n_{\rm test}$ is the number of test events; this is the leading term and uses the \emph{test} size, not the calibration size. The second is a calibration-draw component. Since all $p$-values in one experiment share a single calibration set, they are positively correlated, and resampling the calibration set (a bootstrap) and recomputing the exceedance estimates contributes $\mathrm{SE}_{\rm cal}$. The total error is $\mathrm{SE}_{\rm tot}=\sqrt{\mathrm{SE}_{\rm bin}^2+\mathrm{SE}_{\rm cal}^2}$. For the validity measurement that follows, $n_{\rm test}\approx6\times10^{4}$ gives $\mathrm{SE}_{\rm bin}\approx0.001$, 
and the calibration component is of the same order when the full SB reservoir is used, so the total error is $\approx0.0014$, which we round up to $\approx0.002$ when quoting results.

\subsection{Restoring exchangeability on data}
\label{sec:lhco_validity}

We first apply the validity diagnostic of Prop.~\ref{prop:unif} to real data. Fig.~\ref{fig:lhco_validity-score} shows the anomaly score distribution for the background (blue) and signal (red) within the signal region. Calibrating the SR scores against the SB reservoir and examining the conformal $p$-values of \emph{signal-region background} events, these should be uniform if the SB and SR backgrounds are exchangeable. They are not. Fig.~\ref{fig:lhco_validity-exchange} shows the exceedance curve for the sideband-calibrated background (red) bowing clearly above the diagonal. We measure 
\begin{equation}
  \Prob(\hat p \le 0.05) = 0.087 \pm 0.002 \ \text{(sideband-calibrated)},
\end{equation}
against the nominal $0.05$, a $\sim$$18\sigma$ departure given the total error defined above (Sec.~\ref{sec:lhco_size}). For comparison, calibrating SR background against a held-out half of SR background (green) (exchangeable by construction, the ``ideal'' control that is unavailable in a real blinded search) restores $\Prob(\hat p\le0.05)=0.045\pm0.002$, on the diagonal. The non-uniformity is therefore not a statistical artefact but a genuine breakdown of exchangeability. The jet substructure features are correlated with $m_{jj}$, so the background score distribution drifts between SB and SR (the mean score shifts from $0.283$ in the SB to $0.292$ in the SR). We use the idealised-AD/CWoLa score deliberately as a stress test. By training directly on the SR-versus-SB difference, it is the most mass-correlated choice, so the shift it produces sits at the large end of what one should expect. A more mass-robust detector, such as the full CATHODE flow whose background sample is drawn at the SR mass, would generally exhibit a smaller shift; but its size is not knowable a priori and must be measured, which is what the diagnostic provides, and whatever its size, it can be remedied by the weighted correction.

\begin{figure}
    \centering
    \subfigure[Anomaly-score distributions for SR background and (truth-labelled) SR signal; the signal populates higher scores, as a working detector requires.\label{fig:lhco_validity-score}]{\includegraphics[width=0.95\linewidth]{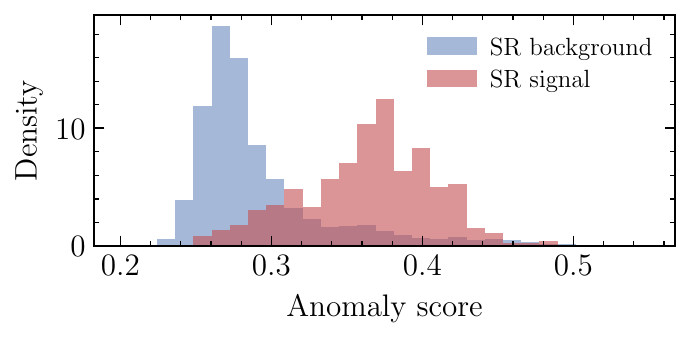}}

    \subfigure[Validity diagnostic: the conformal exceedance $\Prob(\hat p\le a)$ for SR \emph{background} events. The sideband-calibrated curve (red) bows above the diagonal, a real exchangeability failure, while the SR-self-calibrated curve (green, the unavailable ideal) lies on the diagonal.\label{fig:lhco_validity-exchange}]{\includegraphics[width=0.95\linewidth]{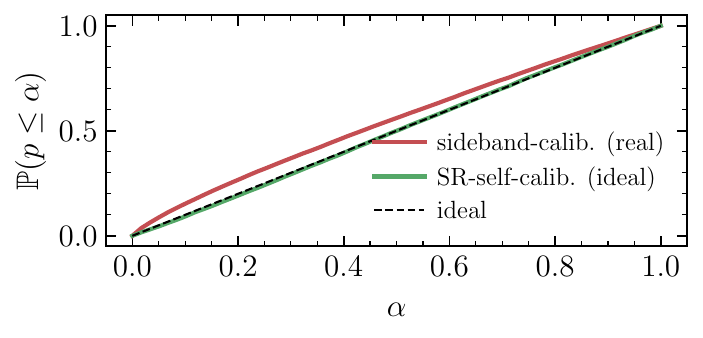}}

    \subfigure[The label-free weighted correction of Sec.~\ref{sec:weighted} pulls the exceedance back onto the diagonal (blue). Shaded bands are $\pm1$ total standard error (binomial test component combined in quadrature with the calibration-bootstrap component; Sec.~\ref{sec:lhco_size}).\label{fig:lhco_validity-weighted}]{\includegraphics[width=0.95\linewidth]{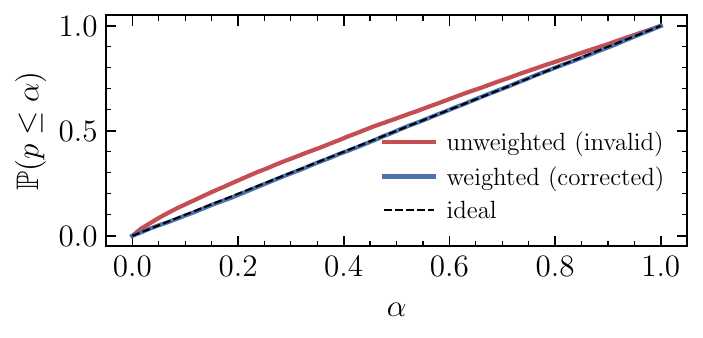}}

    \caption{Conformal calibration on LHC Olympics R\&D data.}\label{fig:lhco_validity}
\end{figure}

Crucially, this failure is invisible to the standard pipeline. An asymptotic local $p$-value computed against an SB-derived background template would inherit the same $m_{jj}$--feature correlation as a silent normalisation bias and report inflated significance with no internal warning. The conformal diagnostic, by contrast, \emph{announces} the problem as a visibly non-uniform background exceedance.

We then apply the label-free weighted correction of Section~\ref{sec:weighted}. We estimate the SR-to-SB density ratio $w(s)=p_{\rm SR}(s)/p_{\rm SB}(s)$~\cite{Andreassen:2020nkr} by training a classifier to separate \emph{observed} SR events (signal and background mixed, as in a real search) from SB events on the score alone, and from the weighted conformal $p$-value shown in Eq.~\eqref{eq:wcp}. No truth labels enter the weight; the SR and SB are defined purely by $m_{jj}$. Fig.~\ref{fig:lhco_validity-weighted} shows the result. The weighted background exceedance returns to the diagonal,
\begin{equation}
  \Prob(\pw \le 0.05) = 0.050 \pm 0.002\ \text{(label-free weighted)},
\end{equation}
restoring validity. The exchangeability failure of a real resonant anomaly-detection sideband is thus not merely detected but corrected, using only the kind of density ratio the search already estimates.

\subsection{The power cost of the weighted correction}
\label{sec:lhco_power}

The weighting that restores background validity is learned from the \emph{observed} signal region, which contains any signal present, so one must ask whether it also suppresses a genuine excess. It does, by a measurable amount, which is quantified in Fig.~\ref{fig:power-sig}. We set up a controlled injection. An observed SR sample of $n=4\times10^{4}$ events is assembled from SR-background events and a fraction $f$ of events drawn from the pure-signal pool, mimicking a real, label-unknown SR of signal strength $f$. From this observed SR, we learn the label-free density ratio $w$ exactly as in Sec.~\ref{sec:lhco_validity}, by training a classifier to separate the observed SR from the SB on the score alone. We then draw an \emph{independent} set of signal events and compute their conformal $p$-values two ways, unweighted (sideband-calibrated, Eq.~\eqref{eq:cp}) and label-free weighted (Eq.~\eqref{eq:wcp}), and record the \emph{median} signal $p$-value as a scalar summary of how strongly the detector flags the signal. Fig.~\ref{fig:power-sig} plots this median against the injected fraction $f$ for the two estimators. A genuine signal should produce \emph{small} $p$-values, so the lower curve is the more sensitive one, and the weighted curve lying above the unweighted one is the power cost made visible. Scanning $f\in\{1,2,5\}\%$, the weighted median rises by roughly a factor of two over the unweighted one (from $0.049$ to $0.087$ at $f=1\%$, and from $0.048$ to $0.105$ at $f=5\%$). The correction is therefore conservative on signal as well as on background. It buys validity with power. The detector is not insensitive, however. Even under the plain unweighted calibration, the signal median sits at $\approx0.05$, so about half of the injected signal events already fall below the $\alpha=0.05$ threshold. The weakly-supervised tree simply has no loose-cut counting reach at the native $0.6\%$ injection. The weighted $p$-value is thus a \emph{guarded} statistic. A discovery that survives it is robust to the sideband shift, whereas the unweighted statistic should be used only when exchangeability is independently established.

\begin{figure}[th]
    \centering
    \subfigure[Median conformal $p$-value of injected signal events versus injected fraction $f$, unweighted (sideband-calibrated) and label-free weighted; the weighting roughly doubles the signal $p$-value, the price of restoring validity.\label{fig:power-sig}]{\includegraphics[width=0.95\linewidth]{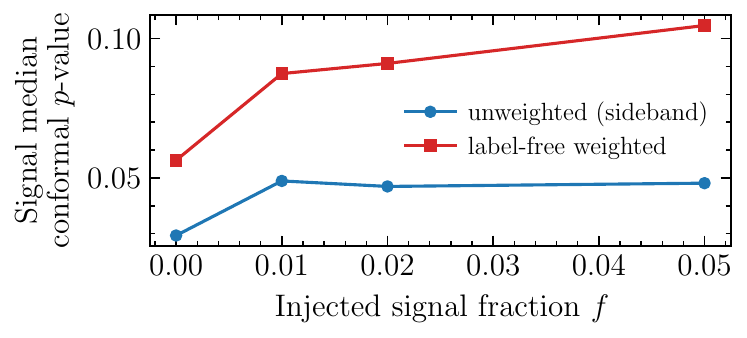}}
    
    \subfigure[Background false-positive rate $\Prob(\hat p\le0.05)$ for the label-free weighted $p$-values, versus the signal contamination $\epsilon$ of the observed signal region from which the weight is learned (blue, $\pm {\rm SE}_{\rm tot}$). Black dashed: nominal $0.05$; red dotted: the unweighted, biased baseline ($\approx0.087$). The weighted rate stays at or below nominal and grows more conservative with $\epsilon$.\label{fig:power-bkg}]{\includegraphics[width=0.95\linewidth]{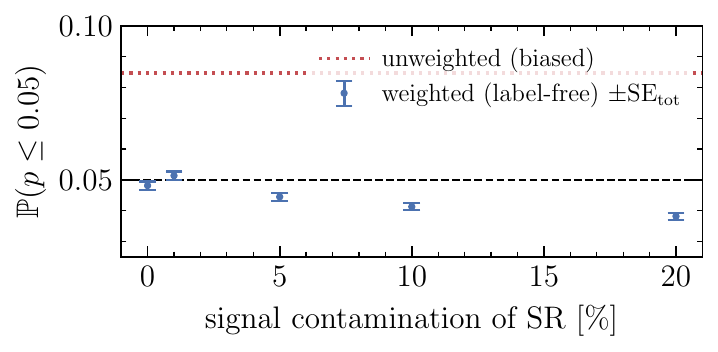}}

    \caption{Behaviour of the label-free weighted correction with signal contamination on LHC Olympics data.}
    \label{fig:power}
\end{figure}

A complementary question is what happens to \emph{background} validity when the weight is itself learned from a signal-contaminated signal region, since in a real search the observed SR always contains whatever signal is present (Fig.~\ref{fig:power-bkg}). For each contamination level $\epsilon\in\{0,1,5,10,20\}\%$ we build an observed SR with signal fraction $\epsilon$ (background from SR-background, signal from the pure-signal pool), learn the label-free ratio $w_\epsilon$ from that contaminated SR against the sideband, and measure the weighted background false-positive rate $\Prob(\pw\le0.05)$ on held-out SR-background events. The rate stays at or below the nominal $0.05$ at every $\epsilon$ and drifts progressively \emph{below} it as the contamination grows, while the unweighted baseline (red dotted) sits at the biased value of Sec.~\ref{sec:lhco_validity} ($\approx0.087$). The mechanism is the fail-safe direction of reference contamination (Theorem~\ref{thm:robust}). Signal populates high scores, so contamination only inflates $w_\epsilon$ in the high-score tail, making background events look \emph{less} anomalous rather than more. Signal contamination of the signal region therefore costs a little power but never background validity; the one premise that must still hold is that the \emph{sideband} is nearly signal-free, which a resonance localised in $m_{jj}$ guarantees. 
The Gross--Vitells global significance of the worked example is also stable in the free reference level $u_0$. Recomputing the bound Eq.~\eqref{eq:gv} on the toy-null field over $u_0\in\{0,0.3,0.5,0.8,1.0,1.5\}$, the mean up-crossing count falls from $\E[N_{u_0}]\approx8$ to $3$ as $u_0$ rises, but the resulting global significance stays at $0\sigma$ throughout (global $p$-value $\gtrsim0.8$, the bound saturating at unity), consistent with the null, so the global result does not hinge on that free choice.


\subsection{Template-free robustness on data}
\label{sec:lhco_asym}

Section~\ref{sec:disc} argued that the conformal score, calibrated against a control sample, is immune to background-template mismodelling in a way the asymptotic local $p$-value is not.\footnote{CATHODE's background dependence has been mitigated in a follow-up paper, LaCATHODE~\cite{Hallin:2022eoq}.} Similar to Fig.~\ref{fig:asym-fpr}, Fig.~\ref{fig:lhco_asym} demonstrates this on the LHCO scores. We define a single-bin counting experiment from the real scores (events above a fixed score cut, counted in the SR) and compare the asymptotic Poisson significance computed against an \emph{assumed} background expectation $b_{\rm m}$ with the conformal assessment calibrated against the control. As the assumed template is increasingly underestimated, the asymptotic false-positive rate diverges; a $10\%$ underestimate sends $\Prob(p\le0.05)$ from the nominal $0.05$ to $\approx0.42$ on pure background, manufacturing fake excesses, while the conformal rate stays flat at $\approx0.056$ because it never consults the template. This is the central practical advantage of the method, restated in terms of data. It fails safe against the dominant systematic in real searches.

\begin{figure}[th]
  \centering
  \includegraphics[width=0.95\linewidth]{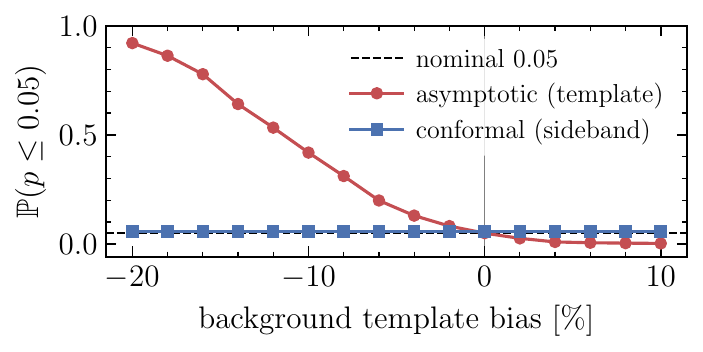}
  \caption{False-positive rate at $\alpha=0.05$ on pure background vs template bias: the asymptotic rate (red) diverges as the template is underestimated, and the conformal rate (blue) stays nominal.}
  \label{fig:lhco_asym}
\end{figure}

\subsection{The role of calibration size}
\label{sec:lhco_calsize}

Figure~\ref{fig:lhco-calsize} studies the dependence on $n_{\rm cal}$ explicitly for both the unweighted and the label-free weighted estimator. At each calibration size we draw $R$ independent background calibration subsets of that size from the sideband reservoir, recompute the background exceedance $\Prob(\hat p\le0.05)$ on the signal-region background for each, and report the mean over draws with an error bar that combines, in quadrature, the across-draw standard deviation (the calibration component) with the binomial test component Eq.~\eqref{eq:se_bin}. The across-draw spread replaces the single-draw bootstrap, which underestimates the scatter at small $n_{\rm cal}$, while the binomial term is the constant floor that survives as $n_{\rm cal}\to\infty$. The weight is held fixed, and only the conformal calibration set is resized. The top panel shows the result. The unweighted exceedance (red) is flat at $\approx0.087$ from $n_{\rm cal}=10^{2}$ to the full $3.8\times10^{5}$. The validity breach does not diminish as calibration data are added, confirming that it is a \emph{systematic} exchangeability bias rather than a finite-sample fluctuation, since the $n_{\rm cal}\to\infty$ limit yields the sideband-tail value rather than the nominal $0.05$. The weighted exceedance (blue) instead tracks the nominal $0.05$ at every size, so weighting restores validity independently of calibration size. At the largest sizes, it plateaus a few $\times10^{-3}$ \emph{above} nominal, and the shrinking error bar leaves $0.05$ about one standard deviation below the point. This is not a failure but the expected imprint of an \emph{estimated} weight. The label-free ratio is learned rather than exact, so a small residual coverage gap survives, bounded by the total-variation distance of Ref.~\cite{Barber2023shift}. It is a fixed estimation residual rather than a divergence; it does not grow with $n_{\rm cal}$, and it only becomes statistically resolvable once the test error has shrunk below it. A more accurate weight model would push it back towards nominal, and for the discovery decision, it is immaterial, being two orders of magnitude smaller than the unweighted breach it replaces.

\begin{figure}[th]
  \centering
  \includegraphics[width=0.95\linewidth]{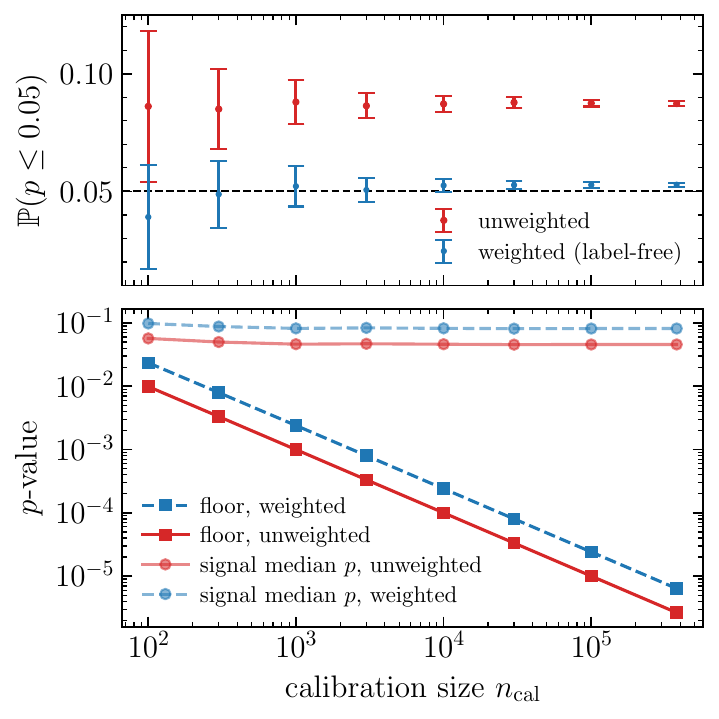}
  \caption{Effect of the calibration-set size $n_{\rm cal}$ for the unweighted (red) and label-free weighted (blue) estimators; points are means over $R$ fresh calibration draws, and the error bars are the across-draw standard deviation combined in quadrature with the binomial test component Eq.~\eqref{eq:se_bin}. \emph{Top:} background exceedance $\Prob(\hat p\le0.05)$ on signal-region background. The unweighted rate is flat at $\approx0.087$ across three orders of magnitude in $n_{\rm cal}$, a systematic exchangeability bias that more data does not remove, whereas the weighted rate sits at the nominal $0.05$ (dashed) at every size, mildly conservative at small $n_{\rm cal}$. \emph{Bottom:} the resolution floor, the smallest resolvable $p$-value, $1/(n_{\rm cal}+1)$ unweighted and $w_{\max}/(w_{\max}+\sum_i w_i)$ weighted (larger by $\approx2.4$), together with the signal median $p$-value (also $\approx2\times$ larger under weighting) that the floor must beat. The weighting that fixes validity costs about a factor of two in both resolving power and signal $p$-value.}
  \label{fig:lhco-calsize}
\end{figure}

The bottom panel shows the resolution floor, the smallest $p$-value a calibration set can assign to the most anomalous event. For the unweighted estimator this is $1/(n_{\rm cal}+1)$; for the weighted estimator a test point beyond all calibration points receives $w(s^\star)/[w(s^\star)+\sum_i w_i]$, so the floor is $w_{\max}/(w_{\max}+\sum_i w_i)$, larger than the unweighted floor by a roughly constant factor $w_{\max}/\langle w\rangle\approx2.4$ set by the up-weighting of the high-score tail. Weighting therefore costs about a factor of two in resolving power: one needs roughly twice the calibration data to reach a given smallest $p$-value. The same panel overlays the signal median $p$-value, which the floor must drop below for the calibration set to resolve the signal; it is likewise about twice as large under weighting, the power cost already seen in Sec.~\ref{sec:lhco_power}. The practical rule follows: size the calibration set by the smallest $p$-value one must resolve, $n_{\rm cal}\gtrsim10^{3}$ for a local $p\sim10^{-3}$ and $n_{\rm cal}\gtrsim3\times10^{6}$ for a discovery-level $p\sim3\times10^{-7}$ ($5\sigma$), doubling that budget if the weighted estimator is used. That budget is a \emph{per-event} statement, the calibration a single event needs to reach a discovery-level local $p$, and it lies an order of magnitude above the $3.8\times10^{5}$ sideband available here; it does not bind the present demonstration, whose quoted significances are counting excesses at the loose threshold $\alpha=0.05$ (Sec.~\ref{sec:lhco_significance}) that accumulate over many events rather than resting on any single one, so the binding requirement is only to resolve $p\sim\alpha$, met with orders of magnitude to spare. With $n_{\rm cal}=3.8\times10^{5}$ the per-event floor is $p_{\min}\approx6\times10^{-6}$ (weighted, $\approx4.4\sigma$), short of a single-event $5\sigma$ but far below the signal's actual median $p$, so the set resolves the signal it is asked to (Fig.~\ref{fig:lhco-calsize}); a per-event $5\sigma$ would require the larger budget above and is a target for a future high-statistics analysis, not a precondition for this study.

\subsection{Null variance of the count field}
\label{sec:nullvar}

The significances in this worked example, the loose-cut count just reported and the scan that follows, are all read off the conformal count statistic $Z(m)$ of Eq.~\eqref{eq:Zfield}, so their honesty rests entirely on using the \emph{correct} standard deviation in the denominator. The purpose of this section is to make sure we do. It is the variance-side counterpart of the calibration check of Sec.~\ref{sec:lhco_validity}. There we corrected the \emph{mean} flag rate, whereas here we correct the \emph{spread} of the flag count, which the binomial formula underestimates because the within-window $p$-values are correlated through their shared calibration set. Too small a denominator would inflate every reported significance, local and global, even with no signal present, so we first measure the size of the effect and then replace the analytic denominator with a resampled one. 

The binomial denominator $n_m\alpha(1-\alpha)$ is exact only if the within-window indicators $\mathbf 1\{\hat p_i\le\alpha\}$ are independent, which they are not. All $p$-values in a window are ranked against a single calibration set, so they are positively dependent, and the true variance is inflated. We measure this inflation directly. Drawing a test window and a calibration set from a common background reservoir, we form the count $k=|\{\hat p\le\alpha\}|$ and estimate the over-dispersion $c=\mathrm{Var}(k)/[n\alpha(1-\alpha)]$ from the scatter of $k$ over many independent draws. Figure~\ref{fig:nullvar-overdispersion} answers how large this inflation is and when it matters. Plotting $c$ against $n_{\rm test}/n_{\rm cal}$, it tracks $c\approx 1+n_{\rm test}/n_{\rm cal}$ (dashed guide), so the binomial formula is exact only when the calibration set dwarfs the window and is anti-conservative by $\sqrt c$ otherwise. For the loose-cut count of Sec.~\ref{sec:lhco_significance}, 
the reservoir dwarfs the test ($c\approx1.16$) and the correction is negligible; 
in the scan below, where each window is calibrated against its neighbouring windows, $c$ ranges from $\approx1.4$ in the three interior windows to $\approx2.7$ at the lower edge, whose single sideband is smaller than the window itself; the naïve significance is then inflated by $\sqrt c$, from $\approx1.2$ in the interior to $\approx1.7$ at that edge.

\begin{figure}[th]
  \centering
  \includegraphics[width=0.95\linewidth]{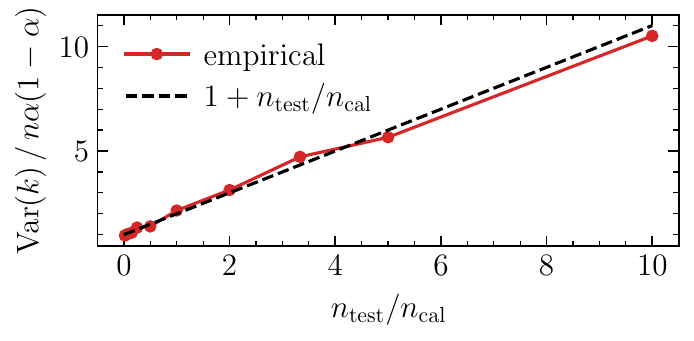}
  \caption{The null variance of the conformal count field: the over-dispersion $c=\mathrm{Var}(k)/[n\alpha(1-\alpha)]$ of the window count, estimated from the scatter of $k$ over many background draws, against $n_{\rm test}/n_{\rm cal}$. It tracks $c\approx1+n_{\rm test}/n_{\rm cal}$ (dashed guide), so the binomial variance is exact only when the calibration set dwarfs the window and understates the true variance by $\sqrt c$ otherwise. The resulting correction removes a systematic, not a statistical, optimism from the reported significance.}
  \label{fig:nullvar}
  \label{fig:nullvar-overdispersion}
\end{figure}

We therefore standardise the scan field by a fully nonparametric null rather than the binomial formula. For each window, the mean $\hat\mu_m$ and standard deviation $\hat\sigma_m$ of $k_m$ are estimated from background-only events, resampling both the calibration band and the test window, and the field is formed as Eq.~\eqref{eq:Zfield}. The effect of this correction on the significance actually quoted on data is shown in the wide-mass scan of Sec.~\ref{sec:lhco_scan} (Fig.~\ref{fig:local-significance}), where the binomial and toy-based standardisations of the conformal count are compared directly. 
Because the toy-based null estimates the mean $\hat\mu_m$ from background only, it re-centres the count as well as rescaling its spread, so it need not order the same way as a rescaled binomial field across windows (made explicit in Sec.~\ref{sec:lhco_scan}). It is this fully nonparametric field, free of the binomial and Gaussian-$c$ approximations, that we feed to the Gross--Vitells correction (Eq.~\eqref{eq:gv}). The correction removes a systematic, not a statistical, optimism from the reported significance.

The Gross--Vitells global $p$-value also depends on a chosen reference level $u$ and the toy-estimated up-crossing rate $\E[N_u]$. We have checked that the global significance of the worked example is stable across $u\in[0,1.5]$ and consistent with the null throughout (Sec.~\ref{sec:lhco_power}), so the result does not hinge on the free choice of $u$; Eq.~\eqref{eq:gv} is an asymptotic bound whose Gaussian-field assumptions are weakest in the extreme tail, so we quote it only as an upper bound and corroborate it with the empirical global tail from background toys.

\subsection{Discovery significance and comparison with the standard workflow}
\label{sec:lhco_scan}
\label{sec:lhco_significance}

\paragraph{A single loose cut.}
Let us first assume that we know where the signal region is and compute the significance without performing a bump hunt. The simplest discovery statistic is a single loose cut on the signal-region score. The setup is a single counting bin, the $[3.3,3.7]$~TeV signal window with no further $m_{jj}$ sub-binning. We compute the conformal $p$-values of the $n_{\rm SR}=6.2\times10^{4}$ signal-region evaluation events at $\alpha=0.05$, calibrated against the full sideband reservoir ($n_{\rm cal}=3.8\times10^{5}$), and the statistic is the count of SR events with $p\le\alpha$ standardised by Eq.~\eqref{eq:Zfield}. The observed count exceeds the nominal expectation $\alpha\,n_{\rm SR}\approx3.1\times10^{3}$, a $Z_{\rm naive}\approx46$ ``excess''. This is almost entirely spurious. It is the $0.087$-versus-$0.05$ calibration breach of Sec.~\ref{sec:lhco_validity} re-expressed as a counting significance and accumulated over $\sim$$6\times10^{4}$ overwhelmingly background events; the injected signal, only $\sim$$0.6\%$ of the SR, cannot account for it, 
and the over-dispersion of Sec.~\ref{sec:nullvar} ($c\approx1.16$ here) lowers it only to $\approx43\sigma$ because the breach is a \emph{bias} in the mean rate, not a variance effect. Applying the label-free weighted calibration returns the background rate to nominal ($0.0499$), and the observed SR count ($k=3116$) matches the corrected expectation ($3103$) to
\begin{equation}
  Z_{\rm corrected}\;\approx\;0.2,
\end{equation}
no significant excess. With a weakly-supervised tree ($\mathrm{SIC}\approx2$) at the native $\sim$$0.6\%$ injection, a single loose cut has no counting reach, and the weighted scan over cuts from $\alpha=0.05$ down to $10^{-3}$ likewise yields $|Z|\lesssim1$. The contrast between $Z_{\rm naive}\approx46$ and $Z_{\rm corrected}\approx0.2$ is the calibration layer in one line. It converts a discovery-faking systematic into an honest null without touching the detector's raw sensitivity. The two numbers are a counting test at a single working point, and the like-for-like comparison with a scanning search is the wide-mass bump hunt below, whose significances are computed differently and are not to be read against the $46$ and $0.2$ here. The size of that gap is itself instructive. This loose cut calibrates the fixed signal-region score against the \emph{full} sideband reservoir, which spans the whole spectrum and is the most mass-mismatched control available, so it carries the largest exchangeability breach, the $0.087$ against the nominal $0.05$. The wide-mass scan instead retrains in each window and calibrates against the two adjacent windows, which are close in mass and only mildly drifted, with a background rate near $0.057$. A counting significance scales as the rate excess times $\sqrt{n}$, so the roughly fivefold larger breach of the full-sideband calibration outweighs the smaller event count and inflates the significance by about a factor of four, which is the difference between the $\sim$$46$ here and the $\sim$$10$--$12$ of the scan. Neither is a signal significance; the true content at this cut is $S/\sqrt{B}\approx3.2$ (Table~\ref{tab:sb}).

\begin{figure*}[t]
    \centering
    \includegraphics[width=0.95\linewidth]{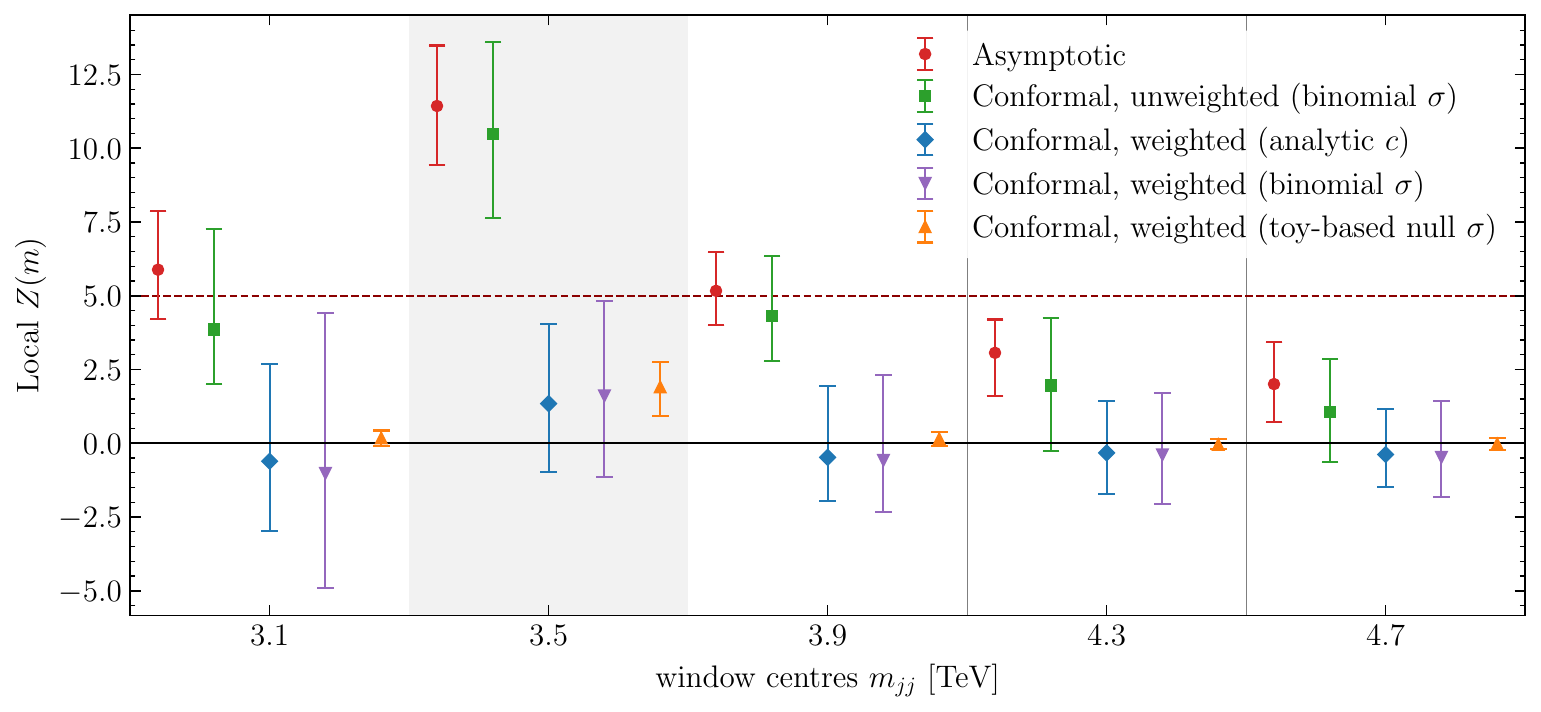}
    \caption{Local significance per window in the blind wide-mass scan, retraining the detector in each of the five windows over $m_{jj}\in[2.9,4.9]$~TeV (the $[3.3,3.7]$~TeV signal window is shaded; the injected resonance sits at $\sim$$3.5$~TeV). Markers are the mean over $50$ random initialisations and the bars their $[16\%,\ 84\%]$-percentile. The asymptotic Asimov significance (red) and the unweighted conformal count (green, binomial standardisation) rise well above $5\sigma$ in several windows, including signal-free ones, an artefact of the score--$m_{jj}$ sculpting amplified by the window statistics. The label-free weighted conformal count is consistent with the null across all windows under all three standardisations: analytic over-dispersion $c$ (blue), binomial (purple), and toy-based null (orange), which is sampled from an additional 100 toys.}
    \label{fig:local-significance}
\end{figure*}

\paragraph{The enhanced bump hunt.}
A single loose cut fixes the signal window in advance. A blind search does not know where the resonance sits, so it scans the spectrum, and a fixed signal-region-trained score is only meaningful at the window it was trained on. We therefore run the scan as a genuine wide-mass search, retraining the detector in each window. The spectrum $m_{jj}~\in[2.9,4.9]$~TeV is split into five windows; in each window, 
that window plays the role of the signal region and its neighbouring windows the sidebands (the three interior windows use both neighbours; the two outermost windows have a single neighbour and are calibrated against that one window alone), and a classifier is retrained out-of-fold (two-fold, so no event is scored by a model trained on it). To average over training stochasticity, we repeat the whole procedure over $50$ random initialisations and report the mean and $[16\%, 84\%]$-percentile. In each window we compute three significance estimates on the identical events: the asymptotic Asimov significance Eq.~\eqref{eq:asimov} at a top-$5\%$ score cut, with the background taken from the sideband efficiency; the unweighted conformal count statistic Eq.~\eqref{eq:Zfield} at $\alpha=0.05$, standardised by the binomial variance; and the label-free weighted conformal count, standardised three ways for comparison, by the analytic over-dispersion $c=1+n_{\rm SR}/n_{\rm SB}$, by the plain binomial variance, and by a fully nonparametric toy-based null built from $100$ background resamples of the window (Sec.~\ref{sec:nullvar}). Because the analytic over-dispersion is $c=1+n_{\rm SR}/n_{\rm SB}$, the single-sided sidebands of the two edge windows make their $c$ larger than the interior ones: with the steeply falling spectrum the populations give $c\approx1.4$ for the three interior windows but $c\approx2.7$ and $c\approx1.5$ at the lower and upper edges, the lower edge being the most populated window calibrated against one smaller neighbour. This enters only the analytic-$c$ and binomial standardisations; the toy-based null we carry forward resamples each window's own sidebands and so absorbs the edge effect automatically, and the local maximum sits in the interior signal window ($c\approx1.4$), so the reported significances are unaffected. The injected resonance sits at $\sim$$3.5$~TeV, i.e.\ only in the second window; the other four are signal-free-ish, so a correct procedure must flag at most the second.


The result is Fig.~\ref{fig:local-significance}. Two features stand out. First, the asymptotic and the unweighted conformal significances do not stay finite and localised. 
They exceed $5\sigma$ not only in the signal window ($\approx11.4$ and $\approx10.5$ respectively) but also in signal-free windows ($\approx5\sigma$ at $3.1$ and $3.9$~TeV), and they grow with the window occupancy. This divergence is unphysical, and it has a precise cause. The score--$m_{jj}$ sculpting of Sec.~\ref{sec:lhco_validity} is a \emph{bias} in the mean rate, which both estimators convert into a significance that scales as $\sqrt{n}$. The asymptotic test trusts a sideband-efficiency background that inherits the sculpting, and the unweighted conformal count uses the binomial variance, which understates the true spread by $\sqrt c$. Neither is a trustworthy discovery significance; both require the correction the conformal layer supplies. Second, the label-free weighted count is consistent with the null in every window under all three standardisations, with only a mild $\approx2\sigma$ at the true signal, the same power cost seen in Sec.~\ref{sec:lhco_power}. Comparing the standardisations of the weighted count is the wide-mass counterpart of the over-dispersion measurement of Sec.~\ref{sec:nullvar}: 
the binomial standardisation (purple) overstates the significance by $\sqrt c$ relative to the analytic-$c$ correction (blue). The two share the nominal centre $\alpha n_m$ and differ only in the variance, inflated by $c$ in the analytic-$c$ case, so the analytic-$c$ correction is the most conservative of the three. The toy-based null (orange) is not a rescaled binomial field; it also re-centres on the background-only mean $\hat\mu_m$, which here sits below $\alpha n_m$, so it returns the \emph{largest} of the weighted significances ($\approx1.9$ local, against $\approx1.6$ binomial and $\approx1.3$ analytic-$c$; Table~\ref{tab:lee}). We carry the toy-based null forward to the trials-factor step because it alone makes no binomial or Gaussian-$c$ assumption, and we report its value precisely because it is the largest, so as not to undersell a possible excess. The qualitative message is the ordering and its cause, not the absolute values. The standard asymptotic procedure fabricates excesses across a blind scan, whereas the weighted conformal layer raises no false alarm anywhere.

The ground-truth signal content puts these numbers in perspective. Table~\ref{tab:sb} lists, per window, the injected-signal count and the naive significance $S/\sqrt{B}$ that a perfect background estimate would give at the same top-$5\%$ cut, with $S$ and $B$ the signal and background yields surviving the cut. The signal reaches $S/\sqrt{B}\approx2.9$ only in the $3.5$~TeV window and is negligible ($\lesssim0.1$) in the other four. 
The asymptotic $\approx11.4$ at the signal window and the $\approx5\sigma$ it reports off resonance therefore overstate the true content by large factors, the direct effect of the sculpted background estimate; the weighted toy-based null ($\approx1.9$) instead sits just below the true $S/\sqrt{B}$, on the conservative side, the price of the label-free weighting. We also provide the SIC values from different samples to demonstrate the robustness of the procedure. Notice that the changes in the value are not due to parameter initialisation but to resampling the training/calibration dataset.

\begin{table}[t]
  \centering
  \setlength{\tabcolsep}{6pt}
  \begin{tabular}{l|ccc}
    \toprule
    window [TeV] & $N_{\rm sig}$ & $S/\sqrt{B}$ & SIC \\
    \midrule
    $(2.9,3.3]$ &     102 & $0.022\pm0.017$ & $0.099\pm0.079$ \\
    $(3.3,3.7]$ &     772 & $2.899\pm1.078$ & $1.308\pm0.486$ \\
    $(3.7,4.1]$ &      82 & $0.136\pm0.061$ & $0.429\pm0.193$ \\
    $(4.1,4.5]$ &       9 & $0.030\pm0.020$ & $0.640\pm0.428$ \\
    $(4.5,4.9]$ &       4 & $0.003\pm0.009$ & $0.108\pm0.325$ \\
    \bottomrule
  \end{tabular}
  \caption{Ground-truth signal content per window at the top-$5\%$ score cut: the injected-signal count $N_{\rm sig}$ in the window and the naive significance $S/\sqrt{B}$ a perfect background estimate would give, with $S$ and $B$ the signal and background yields surviving the cut (truth labels used only for this reference). The signal is concentrated in the $[3.3,3.7]$~TeV window and negligible elsewhere; values are means over $50$ initialisations.}
  \label{tab:sb}
\end{table}

\paragraph{Look-elsewhere correction.}
The five windows are retrained independently and are essentially non-overlapping, so they behave as five independent tests, and the local maximum is penalised for the multiplicity by the exact discrete trials factor
\begin{equation}
  p_{\rm glob} \;=\; 1-(1-p_{\rm local})^{n_{\rm win}}, \qquad n_{\rm win}=5,
  \label{eq:trials}
\end{equation}
the Gross--Vitells up-crossing bound of Eq.~\eqref{eq:gv} being the continuous-scan limit appropriate to a fine scan rather than a handful of independent windows. Table~\ref{tab:lee} collects the local maximum and the corresponding global significance for each estimator. The asymptotic and unweighted-conformal globals remain at $\gtrsim10\sigma$. The trials factor over five windows barely dents a local peak that is itself an artefact, so these would manufacture a discovery out of background sculpting and are not trustworthy. The weighted conformal globals are all consistent with the null, $\lesssim1\sigma$; the toy-based-null value is the one we report, and it is the honest answer for this deliberately weak classifier at the native $S/B\approx0.6\%$ injection.

\begin{table}[t]
  \centering
  \small
  \setlength{\tabcolsep}{5pt}
  \begin{tabular}{lcc}
    \toprule
    Significance estimate & local max $Z$ & global $Z$ \\
    \midrule
    Asymptotic                       & $\approx 11.4$ & $\approx 11.3$ \\
    Conformal, unweighted            & $\approx 10.5$ & $\approx 10.3$ \\
    Conf.\ weighted (analytic $c$)   & $\approx 1.3$  & $\approx 0.3$  \\
    Conf.\ weighted (binomial)       & $\approx 1.6$  & $\approx 0.7$  \\
    Conf.\ weighted (toy null)       & $\approx 1.9$  & $\approx 1.1$  \\
    \bottomrule
  \end{tabular}
  \caption{Look-elsewhere correction for the wide-mass scan of Fig.~\ref{fig:local-significance}. The local maximum significance over the five windows (read from the $50$-initialisation scan) and the global significance after the discrete trials factor Eq.~\eqref{eq:trials}; the conformal entries are the unweighted count (binomial standardisation) and the weighted count under the analytic-$c$, binomial, and toy-based-null standardisations. The asymptotic and unweighted-conformal globals stay at $\gtrsim10\sigma$, an unphysical artefact of the uncorrected sculpting and over-dispersion; the weighted-conformal globals are consistent with the null.}
  \label{tab:lee}
\end{table}

\paragraph{Null calibration of the global significance.}
A small value on this one dataset is not the same as a controlled procedure, so we calibrate the global significance directly on background-only pseudoexperiments. From a signal-free event bank, we draw $500$ calibration pseudoexperiments and run the entire per-window pipeline on each, retraining the score, refitting the label-free weight, and forming the conformal count to build the per-window null distribution of the count. 
A further $500$ test pseudoexperiments, each with an independently regenerated pipeline, are then scored against that null, each yielding a per-window local $p$-value (the empirical upper-tail probability of its count) and a global $p$-value (the empirical tail of the across-window minimum local $p$). Each toy is an $80{,}000$-event subsample drawn without replacement from one bank of $\approx10^{6}$ signal-free events, with the score, weight, and conformal calibration regenerated independently per toy, and the global null taken as the leave-one-out tail of the across-window minimum $p$ over the calibration toys. The bank is finite, so different toys, and the calibration and test sets, reuse events ($\approx8\%$ expected overlap between any two $80{,}000$-event toys); because the pipeline is re-randomised per toy the counts stay nearly independent where the events coincide, and the realised global false-positive rate ($\approx0.056$ at $\alpha=0.05$) with an on-diagonal QQ shows the overlap does not spoil the uniformity at this toy count, with disjoint calibration and test partitions of the bank left as the cleaner high-statistics check. Both are bona fide probabilities in $[0,1]$ by construction. 
Figure~\ref{fig:global-null} shows that the local $p$-values are uniform under the background null and the global $p$-value is close to uniform, so the global false-positive rate is empirically controlled near its nominal level (FPR $\approx0.056$ at $\alpha=0.05$). This is the empirical counterpart to the asymptotic global step in Sec.~\ref{sec:gv}. It provides direct evidence that the weighted-conformal wide-mass procedure does not fabricate global significance when every stage of the analysis is re-randomised. However, it is essential to re-emphasise that the global $p$-values are not sold with coverage guarantees due to the asymptotic part of the procedure. This is only true for the local $p$-values, as demonstrated in Fig.~\ref{fig:global-null}, where certain regions of the global $p$-value are under-covered, so the coverage is only \emph{empirically} satisfied.

\begin{figure}[t]
  \centering
  \includegraphics[width=0.95\linewidth]{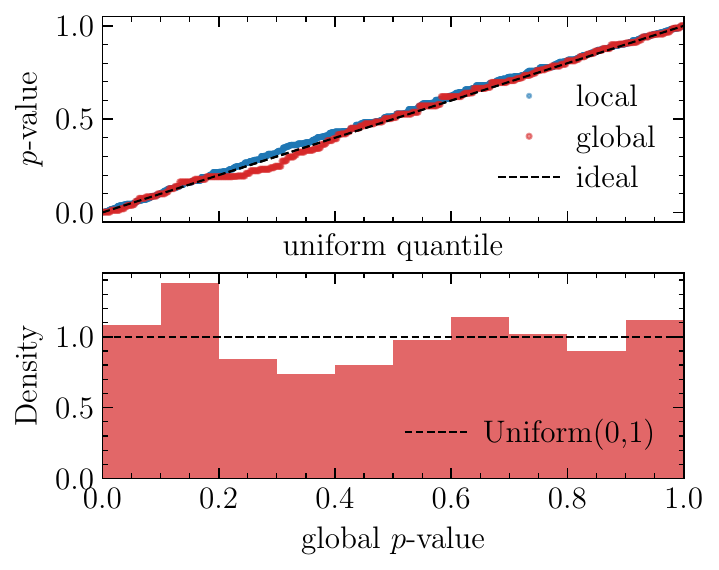}
  \caption{Null calibration of the wide-mass weighted-conformal scan on background-only pseudoexperiments ($500$ calibration and $500$ independent test draws, each regenerating the full per-window pipeline). \emph{Top:} quantile-quantile plot of the per-window local $p$-values (blue) and the global $p$-value (red) against the uniform distribution; the local $p$-values lie on the diagonal and the global $p$-value is close to it. \emph{Bottom:} the global $p$-value histogram, against an ideal uniform distribution (dashed). 
  The local $p$-values are uniform (finite-sample valid); the global $p$-value is close to uniform, so its false-positive rate is empirically controlled near the nominal level, without an analytic coverage guarantee.}
  \label{fig:global-null}
\end{figure}

The comparison with CATHODE is one of calibration, not of raw separation. The fit-independent measure of detector quality is the SIC, on which our deliberately simple tree ($\approx2$) sits well below the CATHODE family ($\approx14$, $11$, $6.5$ for CATHODE, CWoLa hunting and ANODE; Sec.~\ref{sec:lhco_data}), as a weak classifier should. Our point is orthogonal. Whatever the SIC, the asymptotic and unweighted significances inflate with the score--$m_{jj}$ sculpting, whereas the weighted conformal layer reports an honest null and, crucially, a global procedure whose false-positive rate is demonstrably controlled (Fig.~\ref{fig:global-null}).

The one remaining premise is that the local sidebands of each retrained window are background-pure, 
which a residual score--mass correlation mildly violates. We quantify it as the Pearson correlation between the anomaly score and $m_{jj}$ over background events off-resonance ($m_{jj}\notin[3.2,3.8]$~TeV, so the injected signal does not drive it), which gives $r\approx0.17$ (with or without the small signal component); a mass-decorrelated score such as laCATHODE~\cite{Hallin:2022eoq} would reduce both this residual correlation and the over-dispersion and is the natural input to a turnkey wide-mass search. The robust message of Fig.~\ref{fig:local-significance} and Table~\ref{tab:lee} is the contrast between the divergent, uncorrected significances and the calibrated weighted-conformal null, not the absolute values, and it is exactly the progression from the standard asymptotic significance to the auditable, trials-factor-aware significance this paper advocates.

\paragraph{A positive control.} The native injection is too weak for this deliberately poor classifier to discover, so a complete test of the layer must also show that it \emph{can} reach a discovery when a signal is genuinely present. We repeat the wide-mass procedure with a stronger injection, adding events from the pure-signal pool to the signal window at two strengths, $S/B=1.05\%$ and $1.30\%$, up to about twice the native $0.6\%$. Everything else is unchanged from Fig.~\ref{fig:local-significance}, with the same per-window retrained score, label-free weight, toy-based null, and $50$ initialisations. To expose the role of the control, we also leak a fraction of the injected signal into the two adjacent calibration windows, at $5\%$ and $30\%$ of the injected count per window, so the sidebands are no longer perfectly background-pure. Figure~\ref{fig:conf-z-high-sig} shows the per-window significance and Table~\ref{tab:posctrl} the trials-factor-corrected global value.

\begin{figure}[th]
  \centering
  \includegraphics[width=0.95\linewidth]{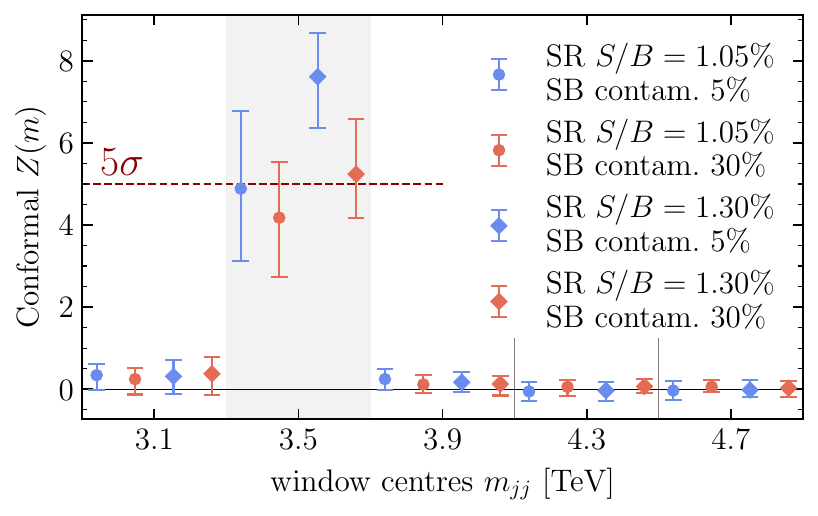}
  \caption{Weighted-conformal local significance $Z(m)$ per mass window for an injected $3.5$~TeV resonance, a positive control for the wide-mass pipeline of Fig.~\ref{fig:local-significance} (per-window retrained score, label-free weight, toy-based null, $50$ initialisations). Four benchmarks cross two signal strengths in the signal region, $S/B=1.05\%$ and $1.30\%$ (the latter about twice the native $0.6\%$ injection), with two levels of signal contamination of the control region, $5\%$ and $30\%$ of the injected signal placed in each adjacent calibration window. Markers are the mean over initialisations and bars the $[16\%,84\%]$ percentiles, with colour encoding the control contamination and marker shape the signal strength. The excess is localised within the shaded signal window, while the far windows remain at the null, so significance is gained where the signal sits rather than across the scan. At low control contamination, the local peak clears $5\sigma$ (dashed); raising the contamination pulls it back towards the null.}
  \label{fig:conf-z-high-sig}
\end{figure}

\begin{table}[t]
  \centering
  \setlength{\tabcolsep}{8pt}
  \begin{tabular}{l|cc}
    \toprule
     & \multicolumn{2}{c}{SB contamination} \\
    \cmidrule(lr){2-3}
    SR $S/B$ & $5\%$ & $30\%$ \\
    \midrule
    $1.05\%$ & $4.6\sigma$ & $3.8\sigma$ \\
    $1.30\%$ & $7.4\sigma$ & $4.9\sigma$ \\
    \bottomrule
  \end{tabular}
  \caption{Trials-factor-corrected global significance of the weighted-conformal chain for the injection study of Fig.~\ref{fig:conf-z-high-sig}, for two signal-region strengths (rows) and two control-region contamination levels (columns). The global $Z$ is the discrete five-window trials factor Eq.~\eqref{eq:trials} applied to the toy-based-null local maximum. 
  }
  \label{tab:posctrl}
\end{table}

The contrast is the message. With a nearly clean control ($5\%$), the weighted chain reaches $7.4\sigma$ global at $S/B=1.30\%$ and $4.6\sigma$ at $1.05\%$, so the layer recovers a true signal rather than only suppressing fakes. Raising the control contamination to $30\%$ pulls the same two points down to $4.9\sigma$ and $3.8\sigma$. Signal that leaks into the calibration window inflates the reference tail and makes the signal-region excess look less anomalous, which is the conservative direction of Theorem~\ref{thm:robust}. Throughout, the excess remains within the signal window, and the far windows sit at the null, so the gain is localised rather than a scan-wide inflation. The clean-control benchmark is not tied to any particular detector. Any neural ansatz that keeps a signal from sculpting its own control region moves the analysis towards it, and the conformal layer reports the significance that follows without further assumptions.

\section{Discussion}
\label{sec:limits}

The proposed methodology does not solve background systematics. Theorem~\ref{thm:wcp} converts a \emph{known or learnable} shift into a correction. An unknown, unparameterised mismodelling remains outside any distribution-free guarantee, conformal or otherwise. The honest scope is that the method makes the assumptions explicit, checkable, and, in the shift and contamination cases, repairable.

\paragraph{Relation to mass-decorrelated detectors.} The exchangeability failure we exploit, jet substructure correlated with $m_{jj}$, is the same pathology that motivated laCATHODE~\cite{Hallin:2022eoq}, which calibrates in a mass-decorrelated latent space so the score does not sculpt the resonant variable by construction. These approaches are complementary, not competing. laCATHODE \emph{prevents} the shift inside the detector, whereas the weighted conformal layer \emph{detects} it through the uniformity diagnostic and \emph{corrects} it post hoc, acting on the score alone and so agnostic to how the score was produced. A decorrelated score would reduce both the over-dispersion and the residual $r\approx0.17$ score--mass correlation documented in Sec.~\ref{sec:lhco_scan}, and is the natural input to a turnkey wide-mass scan. A direct head-to-head comparison on a common benchmark is left to future work. This is orthogonal to a detector's discrimination power. CATHODE's robustness to feature--mass correlations~\cite{Hallin2021cathode} certifies near-optimal \emph{separation}, not the exchangeability of a control region used to calibrate the score, so even a CATHODE score must pass the validity diagnostic before its $p$-values can be trusted.

\paragraph{Signal contamination of the control region.} The injection study of Sec.~\ref{sec:lhco_scan} (Fig.~\ref{fig:conf-z-high-sig} and Table~\ref{tab:posctrl}) isolates one lever on discovery reach, the purity of the control region. With nearly signal-free control, the weighted chain reaches $5\sigma$ for a modest signal, and the reach falls smoothly as the signal leaks into the calibration window. This is not a limitation of the conformal layer, which acts only on the score, but of the detector that decides where the control sits. A score correlated with the resonant variable sculpts its own sidebands, so a localised signal contaminates the very region used to calibrate it. A stronger neural model that suppresses this correlation keeps the control clean and moves the analysis towards the high-reach benchmark of Fig.~\ref{fig:conf-z-high-sig}. The conformal layer is agnostic to how that is achieved. It reports the significance the chosen detector earns and makes the cost of an impure control visible rather than hidden.

\paragraph{Experimental systematics.} The guarantees outlined in this work are statistical. Detector and modelling systematics, jet energy scale and resolution, luminosity, and background shape enter a real analysis through nuisance parameters that the proposed conformal layer does not yet carry. Two partial statements hold. (i) A systematic acting as a smooth covariate shift of the score, for instance, a normalisation or a monotone energy-scale drift between calibration and signal regions, can be corrected by Theorem~\ref{thm:wcp}, given a learnable likelihood ratio. (ii) A bias in the calibration sample itself is the one failure the method cannot self-correct, but, unlike the asymptotic template, it is \emph{visible} as non-uniformity on a trusted control (Sec.~\ref{sec:disc}). A treatment that folds nuisance parameters into the conformal global significance is the most important item for experimental adoption. We will address this in future work.

\paragraph{Scope and limits.} The guarantees presented in this work are precise, yet not bulletproof.
(i) The global $p$-value of Section~\ref{sec:gv} is a Gross--Vitells \emph{bound}, tight at moderate-to-high significance and vacuous at low $z$. It presumes the up-crossing rate is estimated on trustworthy background toys. (ii) Weighted CP (Theorem~\ref{thm:wcp}) needs a
likelihood ratio; a poor estimate degrades coverage by a bounded amount (see ref.~\cite{Barber2023shift}) but not removed. 
(iii) Mondrian calibration costs calibration statistics per bin; very fine binning trades validity for variance. 
(iv) Contamination robustness (Theorem~\ref{thm:robust}) is a non-adversarial result and conservative, costing power. 
(v) The neighbour-calibrated scan field is standardised by a toy-based null (Sec.~\ref{sec:nullvar}). Its validity rests on the local sidebands being background-pure, which a mass-decorrelated score would better ensure (e.g. laCATHODE).
Within these limits, the framework provides an auditable path from an uncalibrated score to finite-sample-calibrated local $p$-values and an asymptotic, trials-factor-aware global significance.

\section{Conclusion \& Outlook}
\label{sec:conclusion}

We have shown that an inductive-conformal layer can sit on top of any anomaly-detection score and turn it into finite-sample-calibrated local $p$-values and an asymptotic, trials-factor-aware global significance, without altering the detector's raw sensitivity. On the LHC Olympics R\&D data, the layer exposed a real exchangeability failure of the sideband calibration, an anti-conservative background rate ($0.087$ against the nominal $0.05$) that, taken at face value in a naïve conformal count, would fake an excess, and the label-free weighted correction restored validity and returned an honest null. Run as a blind wide-mass search that retrains the detector in each window, 
the standard asymptotic and unweighted procedures fabricated $\gtrsim10\sigma$ excesses out of the same sculpting, and $\approx5\sigma$ excesses even in signal-free windows, whereas the weighted-conformal layer returned a global significance consistent with the null, with the false-positive rate of that global procedure verified directly on background-only pseudoexperiments. 
A controlled injection study confirmed the converse. The same weighted chain reaches a $5\sigma$ trials-factor-corrected discovery for a genuine signal at $S/B\approx1.3\%$ once the control region is kept nearly signal-free, so the null on the native sample reflects the deliberately weak classifier and not an insensitive method. The contribution is a calibration and significance layer rather than a new detector, and its value lies in making every assumption explicit and checkable.

Two steps would take the layer from demonstration to deployment. The first is a nuisance-aware calibration that carries detector and modelling systematics through the conformal global significance, the single most important item for experimental adoption. The second is a head-to-head study against mass-decorrelated detectors on a common benchmark, which would quantify how much of the residual sculpting a decorrelated score removes before the conformal layer acts. With those in place, the layer is ready to be folded into a turnkey resonant anomaly search.

\section*{Acknowledgements}
JYA is supported by the Institute for Particle Physics Phenomenology Associateship Scheme and by the Royal Society under grant no. IES/R2/252139.

\bibliography{bibliography}
\end{document}